\documentclass[a4paper,UKenglish]{lipics-v2019}
\pdfoutput=1
\nolinenumbers

\bibliographystyle{plainurl}

\usepackage{amsthm}

\usepackage[T1]{fontenc} 
\usepackage{amssymb}
\usepackage{enumerate}
\usepackage{float}
\usepackage{bookmark}
\usepackage{hyperref}
\usepackage{tikz}
\usepackage{comment}
\usepackage{wrapfig}

\title{Quantum Lower and Upper Bounds for 2D-Grid and Dyck Language} 

\def\luaffil{Center for Quantum Computer Science, Faculty of Computing, University of Latvia}

\author{Andris Ambainis, Kaspars Balodis, J\={a}nis Iraids}{\luaffil}{}{}{}

\author{Kamil Khadiev}{Kazan Federal University, Kazan, Russia}{}{}{}

\author{Vladislavs K\c{l}evickis, Kri\v{s}j\={a}nis Pr\={u}sis}{\luaffil}{}{}{}

\author{Yixin Shen}{Universit\'{e} de Paris, CNRS, IRIF, F-75006 Paris, France}{}{}{}

\author{Juris Smotrovs, Jevg\={e}nijs Vihrovs}{\luaffil}{}{}{}

\authorrunning{A. Ambainis et al.} 

\Copyright{Andris Ambainis, Kaspars Balodis, J\={a}nis Iraids, Kamil Khadiev, Vladislavs K\c{l}evickis, Kri\v{s}j\={a}nis Pr\={u}sis, Yixin Shen, Juris Smotrovs, Jevg\={e}nijs Vihrovs} 

\ccsdesc[500]{Theory of computation~Quantum query complexity} 

\keywords{Quantum query complexity, Quantum algorithms, Dyck language, Grid path} 

\category{} 

\relatedversion{} 

\supplement{}

\funding{Supported by QuantERA ERA-NET Cofund in Quantum Technologies
implemented within the European Union's Horizon 2020 Programme (QuantAlgo project)
and ERDF project 1.1.1.5/18/A/020 ``Quantum algorithms: from complexity theory to experiment''. The research was funded by the subsidy allocated to Kazan Federal University for the state assignment in the sphere of scientific activities.}


\hideLIPIcs  


\EventEditors{Javier Esparza and Daniel Kr{\'a}l'}
\EventNoEds{2}
\EventLongTitle{45th International Symposium on Mathematical Foundations of Computer Science (MFCS 2020)}
\EventShortTitle{MFCS 2020}
\EventAcronym{MFCS}
\EventYear{2020}
\EventDate{August 24--28, 2020}
\EventLocation{Prague, Czech Republic}
\EventLogo{}
\SeriesVolume{170}
\ArticleNo{3}

\usetikzlibrary{arrows.meta}
\usetikzlibrary{math}

\usetikzlibrary{calc}
\usetikzlibrary{decorations.pathreplacing}
\usetikzlibrary{shapes}
\usetikzlibrary{arrows}
\usetikzlibrary{decorations}
\usetikzlibrary{decorations.pathmorphing}
\usetikzlibrary{patterns}
\usetikzlibrary{positioning}
\usetikzlibrary{fpu} 
\usepackage{algorithm}
\usepackage[noend]{algpseudocode}

\let\originalleft\left
\let\originalright\right
\renewcommand{\left}{\mathopen{}\mathclose\bgroup\originalleft}
\renewcommand{\right}{\aftergroup\egroup\originalright}
\newcommand{\ket}[1]{\left| #1 \right\rangle}

\newcommand{\sign}{\operatorname{sign}}
\newcommand{\NULL}{\operatorname{NULL}}

\newcommand{\FindFrom}[1]{\textsc{FindAtLeftmost}_{#1}}
\newcommand{\FindFromRight}[1]{\textsc{FindAtRightmost}_{#1}}
\newcommand{\FindFixedLen}[1]{\textsc{FindFixedLen}_{#1}}
\newcommand{\FindAny}[1]{\textsc{FindAny}_{#1}}
\newcommand{\FindFirst}[1]{\textsc{FindFirst}_{#1}}
\newcommand{\FindLeftFirst}[1]{\textsc{FindLeftFirst}_{#1}}
\newcommand{\FindRightFirst}[1]{\textsc{FindRightFirst}_{#1}}
\newcommand{\FindFixedPos}[1]{\textsc{FindFixedPos}_{#1}}

\makeatletter
\usetikzlibrary{decorations,backgrounds}
\def\pgfdecoratedcontourdistance{0pt}
\pgfset{
  decoration/contour distance/.code=%
    \pgfmathsetlengthmacro\pgfdecoratedcontourdistance{#1}}
\pgfdeclaredecoration{contour lineto closed}{start}{%
  \state{start}[
    next state=draw,
    width=0pt,
    persistent precomputation=\let\pgf@decorate@firstsegmentangle\pgfdecoratedangle]{%
    \pgfpathmoveto{\pgfpointlineattime{.5}
      {\pgfqpoint{0pt}{\pgfdecoratedcontourdistance}}
      {\pgfqpoint{\pgfdecoratedinputsegmentlength}{\pgfdecoratedcontourdistance}}}%
  }%
  \state{draw}[next state=draw, width=\pgfdecoratedinputsegmentlength]{%
    \ifpgf@decorate@is@closepath@%
      \pgfmathsetmacro\pgfdecoratedangletonextinputsegment{%
        -\pgfdecoratedangle+\pgf@decorate@firstsegmentangle}%
    \fi
    \pgfmathsetlengthmacro\pgf@decoration@contour@shorten{%
      -\pgfdecoratedcontourdistance*cot(-\pgfdecoratedangletonextinputsegment/2+90)}%
    \pgfpathlineto
      {\pgfpoint{\pgfdecoratedinputsegmentlength+\pgf@decoration@contour@shorten}
      {\pgfdecoratedcontourdistance}}%
    \ifpgf@decorate@is@closepath@%
      \pgfpathclose
    \fi
  }%
  \state{final}{}%
}
\makeatother
\tikzset{
  contour/.style={
    decoration={
      name=contour lineto closed,
      contour distance=#1
    },
    decorate}}
    
\hypersetup{
  pdftitle = {},
  pdfauthor = {Andris Ambainis, Kaspars Balodis, Janis Iraids, Kamil Khadiev, Vladislavs Klevickis, Krisjanis Prusis, Yixin Shen, Juris Smotrovs, Jevgenijs Vihrovs},
  pdfkeywords = {},
  pdfstartview=FitH,
  unicode=true
}

\newcommand{\ex}[3]{\ensuremath{\textsc{Ex}_{#1}^{#2\mid#3}}}
\newcommand{\exnn}{\ex{2m}{m}{m+1}}

\DeclareMathOperator*{\orf}{\textsc{Or}}
\DeclareMathOperator*{\andf}{\textsc{And}}
\DeclareMathOperator{\dyck}{\textsc{Dyck}}
\newcommand{\im}{f}

\newcommand{\dirtwodtext}{\textsc{2D-DConnectivity}}
\DeclareMathOperator{\dirtwod}{\dirtwodtext}
\newcommand{\undirtwodtext}{\textsc{2D-Connectivity}}
\DeclareMathOperator{\undirtwod}{\undirtwodtext}
\DeclareMathOperator{\dirdd}{d\textsc{D-DConnectivity}}
\DeclareMathOperator{\undirdd}{\textsc{}d\textsc{D-Connectivity}}

\theoremstyle{plain}
\newtheorem{cor}{Corollary}

\begin{document}

\maketitle

\begin{abstract}
We study the quantum query complexity of two problems.

First, we consider the problem of determining if a sequence of parentheses is a properly balanced one ({\em a Dyck word}), with a depth of at most $k$. We call this the $\dyck_{k,n}$ problem. We prove a lower bound of $\Omega(c^k \sqrt{n})$, showing that the complexity of this problem increases exponentially in $k$. Here $n$ is the length of the word. When $k$ is a constant, this is interesting as a representative example of star-free languages for which a surprising $\tilde{O}(\sqrt{n})$ query quantum algorithm was recently constructed by Aaronson et al. \cite{Aaronson2018AQQ}. Their proof does not give rise to a general algorithm.
When $k$ is not a constant, $\dyck_{k,n}$ is not context-free. We give an algorithm with $O\left(\sqrt{n}(\log{n})^{0.5k}\right)$ quantum queries for $\dyck_{k,n}$ for all $k$. This is better than the trival upper bound $n$ for $k=o\left(\frac{\log(n)}{\log\log n}\right)$. 

Second, we consider connectivity problems on grid graphs in 2 dimensions, if some of the edges of the grid may be missing. By embedding the ``balanced parentheses'' problem into the grid, we show a lower bound of $\Omega(n^{1.5-\epsilon})$ for the directed 2D grid and
$\Omega(n^{2-\epsilon})$ for the undirected 2D grid.
The directed problem is interesting as a black-box model for a class of classical dynamic programming strategies including the one that is usually used for the well-known edit distance problem. 
We also show a generalization of this result to more than 2 dimensions. 
\end{abstract}

\section{Introduction}
\label{s:intro}

We study the quantum query complexity of two problems:

{\bf Quantum complexity of regular languages.} 
Consider the problem of recognizing whether an $n$-bit string belongs to a given regular language. This models a variety of computational tasks that can be described by regular languages.
In the quantum case, the most commonly used model for studying the complexity of various problems is the query model. For this setting, Aaronson, Grier and Schaeffer \cite{Aaronson2018AQQ} recently showed that any regular language $L$ has one of three possible quantum query complexities on inputs of length $n$: 
$\Theta(1)$ if the language can be decided by looking at $O(1)$ first or last symbols of the word;
$\tilde{\Theta}(\sqrt{n})$ if the best way to decide $L$ is Grover's search (for example, for the language consisting of all words containing at least one letter a);
$\Theta(n)$ for languages in which we can embed counting modulo some number $p$ which has quantum query complexity $\Theta(n)$.

As shown in \cite{Aaronson2018AQQ}, a regular language being of complexity $\tilde{O}(\sqrt{n})$ (which includes the first two cases above) is equivalent to it being star-free. Star-free languages are defined as the languages which have regular expressions not containing the Kleene star (if it is allowed to use the complement operation). Star-free languages are one of the most commonly studied subclasses of regular languages and there are many equivalent characterizations of them.
%
%
One class of the star-free languages mentioned in \cite{Aaronson2018AQQ} is the Dyck languages (with one type of parenthesis) with constant height $k$. Dyck language with height $k$ consists of words with balanced number of parentheses such that in no prefix the number of opening parentheses exceeds the number of closing parentheses by more than $k$; we denote the problem of determining if an input of length $n$ belongs to this language by $\dyck_{k,n}$. In case of unbounded height $k=\frac{n}{2}$, the language is a fundamental example of a context-free language that is not regular. 
 When more types of parenthesis are allowed, the famous Chomsky–Schützenberger representation theorem shows that any context-free language is the homomorphic image of the intersection of a Dyck language and a regular language.

{\bf Our results.} We show that an exponential dependence of the complexity on $k$ is unavoidable. Namely, for the balanced parentheses language, we have
\begin{itemize}
\item
there exists $c>1$ such that, for all $k\leq \log n$, the quantum query complexity is $\Omega(c^k \sqrt{n})$;
\item
If $k=c\log n$ for an appropriate constant $c$, the quantum query complexity is $\Omega(n^{1-\epsilon})$. 
\end{itemize}

Thus, the exponential dependence on $k$ is unavoidable and distinguishing sequences of balanced parentheses of length $n$ and depth $\log n$ is almost as hard as distinguishing sequences of length $n$ and arbitrary depth. 

Similar lower bounds have recently been independently proven by Buhrman et al. \cite{buhrman2019quantum}.

Additionally, we give an explicit algorithm (see Theorem~\ref{th:upper_bound}) for the decision problem $\dyck_{k,n}$ with $O\left(\sqrt{n}(\log{n})^{0.5k}\right)$ quantum queries. The algorithm also works when $k$ is not a constant and is better than the trivial upper bound of $n$ when $k=o\left(\frac{\log(n)}{\log\log n}\right)$. 

{\bf Finding paths on a grid.}
The second problem that we consider is graph connectivity on subgraphs of the 2D grid. Consider a 2D grid with vertices $(i, j)$,
$i\in \{0, 1, \ldots, n\}, j\in \{0, 1, \ldots, k\}$ and edges from $(i, j)$ to $(i+1, j)$ and $(i, j+1)$. The grid can be either directed (with edges in the directions of increasing coordinates) or undirected. We are given an unknown subgraph $G$ of the 2D grid and we can perform queries to variables $x_u$ (where $u$ is an edge of the grid) defined by $x_u=1$ 
if $u$ belongs to $G$ and 0 otherwise.
The task is to determine whether $G$ contains a path from $(0, 0)$ to $(n, k)$. 

Our interest in this problem is driven by the edit distance problem. In the edit distance problem, we are given two strings $x$ and $y$ and have to determine the smallest number of operations (replacing one symbol by another, removing a symbol or inserting a new symbol) with which one can transform $x$ to $y$. If $|x|\leq n, |y|\leq k$,
the edit distance is solvable in time $O(nk)$ by dynamic programming \cite{wagner1974string}. If $n=k$ then, under the strong exponential time hypothesis (SETH), there is no classical algorithm computing edit distance in time $O(n^{2-\epsilon})$ for $\epsilon>0$ \cite{backurs2015edit} and the dynamic programming algorithm is essentially optimal.

However, SETH does not apply to quantum algorithms.
Namely, SETH asserts that there is no algorithm for general instances of SAT that is substantially better than naive search. Quantumly, a simple use of Grover's search gives a quadratic advantage over naive search. This leads to the question: can this quadratic advantage be extended to edit distance (and other problems that have lower bounds based on SETH)?

Since edit distance is quite important in classical algorithms, the question about its quantum complexity has attracted a substantial interest from various researchers. Boroujeni et al. \cite{boroujeni2018approximating} invented a better-than-classical quantum algorithm 
for approximating the edit distance which was later superseded by a better classical algorithm of \cite{C+18}. However, there has been no quantum algorithms computing the edit distance exactly (which is the most important case).

The main idea of the classical algorithm for edit distance is as follows:
\begin{itemize}
\item
We construct a weighted version of the directed 2D grid (with edge weights 0 and 1) that encodes the edit distance problem for strings $x$ and $y$, with the edit distance being equal to the length of the shortest directed path from $(0, 0)$ to $(n, k)$.
\item
We solve the shortest path problem on this graph and obtain the edit distance.
\end{itemize}
As a first step, we can study the question of whether the shortest path is of length 0 or more than 0. Then, we can view edges of length 0 as present and edges of length 1 as absent. The question ``Is there a path of length of 0?'' then becomes ``Is there a path from $(0, 0)$ to $(n, k)$ in which all edges are present?''. A lower bound for this problem would imply a similar lower bound for the shortest path problem and a quantum algorithm for it may contain ideas that would be useful for a shortest path quantum algorithm.

{\bf Our results.}
We use our lower bound on the balanced parentheses language to show an $\Omega(n^{1.5-\epsilon})$ lower bound for the connectivity problem on the directed 2D grid. This shows a limit on quantum algorithms for finding edit distance through the reduction to shortest paths. More generally, for an $n\times k$ grid ($n>k$), our proof gives a lower bound of
$\Omega((\sqrt{n}k)^{1-\epsilon})$.

The trivial upper bound is $O(nk)$ queries, since there are $O(nk)$ variables. There is no nontrivial quantum algorithm, except for the case when $k$ is very small. Then, we show that the connectivity problem can be solved with $O(\sqrt{n} \log^{k/2} n)$ quantum queries\footnote{Aaronson et al. \cite{Aaronson2018AQQ} also give a bound of $O(\sqrt{n} \log^{m-1} n)$ but in this case $m$ is the rank of the syntactic monoid which can be exponentially larger than $k$.} but this bound becomes trivial already for $k=\Omega(\frac{\log n}{\log \log n})$.

For the undirected 2D grid, we show a lower bound of $\Omega((nk)^{1-\epsilon})$, whenever $k\geq \log n$. Thus, the naive algorithm is almost optimal in this case. We also extend both of these results to higher dimensions, obtaining a lower bound of 
$\Omega((n_1 n_2 \ldots n_d)^{1-\epsilon})$ for an undirected $n_1 \times n_2 \times \ldots \times n_d$ grid in $d$ dimensions and a lower bound of 
$\Omega(n^{(d+1)/2-\epsilon})$ for a directed $n\times n \times \ldots \times n$ grid in $d$ dimensions.

In a recent work, an $\Omega(n^{1.5})$ lower bound for edit distance was shown by Buhrman et al. \cite{buhrman2019quantum}, 
assuming a quantum version of the Strong Exponential Time hypothesis (QSETH). As part of this result they give an $\Omega(n^{1.5})$ query lower bound for a different path problem on a 2D grid. Then QSETH is invoked to prove
that no quantum algorithm can be faster than the best algorithm for this shortest path problem. 
Neither of the two
results follow directly one from another, as different shortest path problems are used.

\section{Definitions}
\label{s:defs}

For a word $x\in\Sigma^*$ and a symbol $a\in\Sigma$, let $|x|_a$ be the number of occurrences of $a$ in $x$.

For two (possibly partial) Boolean functions $g: G \rightarrow \{0,1\}$, where $G \subseteq \{0,1\}^n$, and $h:H\rightarrow \{0,1\}$, where $H\subseteq \{0,1\}^m$, we define the composed function $g\circ h: D\rightarrow \{0,1\}$, with $D\subseteq \{0,1\}^{nm}$, as
$\left(g\circ h\right) (x) = g\left(h(x_1,\dots,x_m), \dots, h(x_{(n-1)m+1},\dots, x_{nm})\right).$
Given a Boolean function $f$ and a nonnegative integer $d$, we define $f^d$ recursively as $f$ iterated $d$ times: $f^d=f\circ f^{d-1}$ with $f^1=f$.

For a matrix $\Gamma$, $\|\Gamma\|$ denotes the spectral norm of $\Gamma$: $\|\Gamma\|= \max_{\overrightarrow{x}\neq 0}\frac{\|\Gamma \overrightarrow{x}\|}{\|\overrightarrow{x}\|}$
where $\|\overrightarrow{x}\|$ is the $2$-norm of a vector.

{\bf Quantum query model.}
    We use the standard form of the quantum query model. 
    Let $f:D\rightarrow \{0,1\},D\subseteq \{0,1\}^n$ be an $n$ variable function we wish to compute on an input $x\in D$. We have an oracle access to the input $x$ --- it is realized by a specific unitary transformation usually defined as $\ket{i}\ket{z}\ket{w}\rightarrow \ket{i}\ket{z+x_i\pmod{2}}\ket{w}$ where the $\ket{i}$ register indicates the index of the variable we are querying, $\ket{z}$ is the output register, and $\ket{w}$ is some auxiliary work-space. An algorithm in the query model consists of alternating applications of arbitrary unitaries independent of the input and the query unitary, and a measurement in the end. The smallest number of queries for an algorithm that outputs $f(x)$ with probability $\geq \frac{2}{3}$ on all $x$ is called the quantum query complexity of the function $f$ and is denoted by $Q(f)$.
    
    Let a symmetric matrix $\Gamma$ be called an adversary matrix for $f$ if the rows and columns of $\Gamma$ are indexed by inputs $x\in D$ and $\Gamma_{xy}=0$ if $f(x)=f(y)$. Let $\Gamma^{(i)}$ be a similarly sized matrix such that $\displaystyle \Gamma^{(i)}_{xy}=\begin{cases}\Gamma_{xy}&\text{ if }x_i\neq y_i\\ 0&\text{ otherwise}\end{cases}$. Then let 
    $\displaystyle Adv^{\pm}(f)=\max_{\substack{\Gamma\text{ - an adversary}\\ \text{matrix for }f}}{\frac{\|\Gamma\|}{\max_i{\|\Gamma^{(i)}\|}}}$
    be called the adversary bound and let
    $\displaystyle Adv(f)=\max_{\substack{\Gamma\text{ - an adversary matrix for }f\\ \Gamma \text{ - nonnegative}}}{\frac{\|\Gamma\|}{\max_i{\|\Gamma^{(i)}\|}}}$
    be called the positive adversary bound.
        The following facts will be relevant for us:
        $Adv(f)\leq Adv^\pm(f)$;
        $Q(f)=\Theta(Adv^{\pm}(f))$ \cite{Reichardt11};
        $Adv^{\pm}$ composes exactly even for partial Boolean functions $f$ and $g$, meaning, $Adv^\pm(f\circ g)=Adv^\pm(f)\cdot Adv^\pm(g)$ \cite[Lemma~6]{kimmel2012quantum}.
    
{\bf Reductions.}
    We will say that a Boolean function $f$ is reducible to $g$ and denote it by $f \leqslant g$ if there exists an algorithm that given an oracle $O_x$ for an input of $f$ transforms it into an oracle $O_y$ for $g$ using at most $O(1)$ calls of oracle $O_x$ such that $f(x)$ can be computed from $g(y)$. Therefore, from $f \leqslant g$ we conclude that $Q(f)\leq Q(g)$ because one can compute $f(x)$ using the algorithm for $g(y)$ and the reduction algorithm that maps $x$ to $y$.

{\bf Dyck languages of bounded depth.}
Let $\Sigma$ be an alphabet consisting of two symbols: \texttt{(} and \texttt{)}.
The Dyck language $L$ consists of all $x\in \Sigma^*$ that represent a correct sequence of opening and closing parentheses.
We consider languages $L_k$ consisting of all words $x\in L$ where the number of opening parentheses that are not closed yet never exceeds $k$.
The language $L_k$ corresponds to a query problem $\dyck_{k,n}(x_1, ..., x_n)$ where $x_1, \ldots, x_n \in \{0, 1\}$ describe a word of length $n$
in the natural way: the $i^{\rm th}$ symbol of $x$ is \texttt{(} if $x_i=0$ and \texttt{)} if $x_i=1$. $\dyck_{k,n}(x)=1$ iff the word $x$ belongs to $L_k$.
For all $x \in \{0,1\}^n$, we define  $f(x)=|x|_{\texttt{0}}-|x|_{\texttt{1}}$,
we call it the \textbf{balance}.
We define a $+k$-substring (resp. $-k$-substring) as a substring whose balance is equal to $k$ (resp. equal to $-k$).
A $\pm k-$substring is a substring whose balance is equal to $k$ in absolute
value.
For all $0\leq i\leq j\leq n-1$, we define $x[i,j]=x_i,x_{i+1},\cdots ,x_j$. Finally, we define $h(x)= \max_{0\leq i\leq n-1}f(x[0,i])$ and $h^-(x)= \min_{0\leq i\leq n-1}f(x[0,i])$.
%
A substring $x[i,j]$  is \emph{minimal} if it does not contain a substring $x[i',j']$ such that $(i,j)\neq (i',j')$, and $f(x[i',j'])=f(x[i,j])$.

{\bf Connectivity on a directed 2D grid.} Let $G_{n,k}$ be a directed version of an $n\times k$ grid in two dimensions, with vertices $(i, j), i\in \{0,1,\ldots, n\}, j\in \{0,1,\ldots, k\}$
and directed edges from $(i, j)$ to $(i+1, j)$ (if $i<n$) and from $(i, j)$ to $(i, j+1)$ (if $j<k$).
If $G$ is a subgraph of $G_{n,k}$, we can describe it by variables $x_e$ corresponding to edges $e$ of $G_{n,k}$: $x_e=1$ 
if the edge $e$ belongs to $G$ and $x_e=0$ otherwise. We consider a problem $\dirtwod$ in which one has to determine if $G$ contains a 
path from $(0, 0)$ to $(n, k)$: 
$\dirtwod_{n,k}(x_1, \ldots, x_m)=1$ (where $m$ is the number of edges in $G_{n, k}$)
iff such a path exists. 

{\bf Connectivity on an undirected 2D grid.} Let $G_{n, k}$ be an undirected $n\times k$ grid and let $G$ be a subgraph of $G_{n, k}$. We describe $G$ by 
variables $x_e$ in a similar way and define $\undirtwod_{n, k}(x_1, \ldots, x_m)=1$ iff $G$ contains a path from $(0, 0)$ to $(n, k)$.
We also consider $d$ dimensional versions of these two problems, on $n_1\times n_2 \times \ldots n_d$ grids. 
In the directed version ($\dirdd$), we have a subgraph $G$ of a directed grid (with edges directed in the directions from $(0, \ldots, 0)$ to
$(n_1, \ldots, n_d)$) and 
$\dirdd(x_1, \ldots, x_m)=1$
iff $G$ contains a directed path from $(0, \ldots, 0)$ to
$(n_1, \ldots, n_d)$.
The undirected version is defined similarly, with an undirected grid instead of a directed one.

\section{A quantum algorithm for membership testing of \texorpdfstring{$\dyck_{k,n}$}{DYCK\_\{n,k\}}}

In this section, we give a quantum algorithm for $\dyck_{k,n}(x)$, where $k$ can be a function of $n$. The general idea is that $\dyck_{k,n}(x)=0$ if and only if one of the following conditions holds:
(i) $x$ contains a $+(k+1)$-substring;
    (ii) $x$ contains a substring $x[0,i]$ such that the balance $f(x[0,i])=-1$;
    (iii) the balance of the entire word $f(x)\neq 0$.
    
The main algorithm is presented in Section \ref{sec:dyck-algo}. It based on a subroutine presented in Section \ref{sec:substr}.


\subsection{\texorpdfstring{$\pm k$}{±k}-Substring Search algorithm}\label{sec:substr}


%
The goal of this section is to describe a quantum algorithm which searches for a substring $x[i,j]$ that has a balance $f(x[i,j])\in\{+k,-k\}$ for some integer $k$. 
Throughout this section, we find and consider only \textbf{minimal} substrings.  A substring is minimal if it does not contain a proper substring with the same balance. Throughout this section we use the following easily verifiable facts:
\begin{itemize}
    \item For any two minimal $\pm k$-substrings $x[i,j]$ and $x[k,l]$: $i<k \implies j<l$. This induces a natural linear order among all $\pm k$-substrings according to their starting (or, equivalently, ending) positions.
    \item Minimal $+k$-substrings do not intersect with minimal $-k$-substrings.
    \item If $x[l_1,r_1]$ and $x[l_2,r_2]$ with $l_1<l_2$ are two \textbf{consecutive} minimal $(k-1)$-substrings and their signs are the same, then $x[l_1,r_2]$ is a $k$-substring with this sign.
\end{itemize} 
This algorithm is the basis of our algorithms for $\dyck_{k,n}$.
The algorithm works in a recursive way. It searches for two consecutive minimal $\pm (k-1)$-substrings $x[l_1,r_1]$ and $x[l_2,r_2]$  such that they either overlap or there are no $\pm (k-1)$-substrings between them. If both substrings $x[l_1,r_1]$ and $x[l_2,r_2]$ are $+(k-1)$-substrings, then we get a minimal $+k$-substring in total. If both substrings are $-(k-1)$-substrings, then we get a minimal $-k$-substring in total.

Our algorithm utilizes three subroutines. The first one is $\FindFrom{k}(l,r,t,d,s) $ which accepts as inputs:
    the borders $l$ and $r$, where $l$ and $r$ are integers such that $0\leq l \leq r\leq n-1$;
    a position $t\in \{l,\dots, r\}$;
    a maximal length $d$ for the substring, where $d$ is an integer such that $0<d\leq r-l+1$;
    the sign of the balance $s\subseteq \{+1,-1\}$. $+1$ is used for searching for a $+k$-substring, $-1$ is used for searching for a $-k$-substring, $\{+1,-1\}$ is used for searching for both.  
%
It outputs a triple $(i,j,\sigma)$ such that $l\leq i\leq t\leq j\leq r$, $j-i+1\leq d$, $f(x[i,j])\in\{+k,-k\}$ and $\sigma=\sign(f(x[i,j]))\in s$. The substring should be the leftmost one that contains $t$, i.e. there is no other minimal $x[i',j']$ such that $i'<i$, $t\in[i',j']$, $f(x[i',j'])=f(x[i,j])$.   If no such substrings have been found, the algorithm returns $\NULL$.

The second one is $\FindFromRight{k}$. It is similar to the $\FindFrom{k}$, but finds the rightmost $\pm k$-substring, i.e. there is no other minimal $x[i',j']$ such that $j'>j$, $t\in[i',j']$, $f(x[i',j'])=f(x[i,j])$

The third one is $\FindFirst{k}(l,r,s,direction)$ and accepts as inputs:
    the borders $l$ and $r$, where $l$ and $r$ are integers such that $0\leq l \leq r\leq n-1$;
    the sign of the balance $s\subseteq \{+1,-1\}$. 
    a $direction \in \{left,right\}$. 
When the direction is right (respectively left), $\FindFirst{k}$ finds the first $\pm k$-substring from the left to the right (respectively from the right to the left) in $[l,r]$ of sign $s$.

These three subroutines are interdependent since $\FindFrom{k}$ uses $\FindFirst{k-1}$ and $\FindFromRight{k-1}$ as subroutines, $\FindFirst{k}$ uses $\FindFrom{k}$ and $\FindFromRight{k}$ as subroutines. 
A description of $\FindFrom{k}(l,r,t,d,s)$  follows. The algorithm is presented in Appendix~\ref{sec:desc_k_substring_base}.  The description of the subroutine $\FindFromRight{k}(l,r,t,d,s)$ is similar and is omitted.

When $k=2$, the procedure $\FindFrom{2}(l,r,t,d,s)$ checks that $x_t=x_{t-1}$ and $\sign(f(x[t-1,t]))\in s$. If yes, it has found the substring. Otherwise, it checks if $x_t=x_{t+1}$ and $\sign(f(x[t,t+1]))\in s$. If both checks fail, the procedure returns $\NULL$. For $k>2$ the procedure is the following.
\begin{description}
    \item[Step $1$.] Check whether $t$ is inside a $\pm (k-1)$-substring of length at most $d-1$, i.e. 
        
        $v=(i,j,\sigma)\gets  \FindFrom{k-1}(l,r,t,d-1,\{+1,-1\}).$
        If $v\neq \NULL$, then $(i_1,j_1,\sigma_1)\gets(i,j,\sigma)$ and the algorithm goes to Step $2$. Otherwise, the algorithm goes to Step $6$.
    \item[Step $2$.] Check whether $i_1-1$ is inside a $\pm (k-1)$-substring of length at most $d-1$ and choose the rightmost one:
    $v=(i,j,\sigma)\gets\FindFromRight{k-1}(l,r,i_1-1,d-1,\{+1,-1\}).$
    
        If $v=\NULL$, then the algorithm goes to Step $3$.
        If $v\neq \NULL$ and $\sigma=\sigma_1$, then $(i_2,j_2,\sigma_2)\gets(i,j,\sigma)$ and go to Step $8$. Otherwise, go to Step $4$.
    \item[Step $3$.] Search for the first $\pm (k-1)$-substring on the left from $i_1-1$ at distance at most $d$, i.e. 
    $v=(i,j,\sigma)\gets\FindFirst{k-1}(\min(l,j_1-d+1),i_1-1),\{+1,-1\},left).$
    If $v\neq \NULL$ and $\sigma_1=\sigma$, then $(i_2,j_2,\sigma_2)\gets(i,j,\sigma)$ and go to Step $8$. Otherwise, go to Step $4$.    
        
    \item[Step $4$.] Check whether $j_1+1$ is inside a $\pm (k-1)$-substring of length at most $d-1$, i.e. 
    
    $v=(i,j,\sigma)\gets\FindFrom{k-1}(l,r,j_1+1,d-1,\{+1,-1\}).$
    
        If $v\neq \NULL$, then $(i_2,j_2,\sigma_2)\gets(i,j,\sigma)$ and go to Step $8$. Otherwise, go to Step $5$.
        
    \item[Step $5$.] Search for the first $\pm (k-1)$-substring on the right from $j_1+1$ at distance at most $d$, i.e. 
    $v=(i,j,\sigma)\gets\FindFirst{k-1}(j_1+1,\min(i_1+d-1,r),\{+1,-1\},right).$ 
    
    If $v\neq \NULL$, then $(i_2,j_2,\sigma_2)\gets(i,j,\sigma)$,
    then go to Step $8$. Otherwise, return $\NULL$.

    \item[Step $6$.] Search for the first $\pm (k-1)$-substring on the right at distance at most $d$ from $t$, i.e.
    
   $v=(i,j,\sigma)\gets\FindFirst{k-1}(t,\min(t+d-1,r),\{+1,-1\},right)$
   
    If $v\neq \NULL$, then $(i_1,j_1,\sigma_1)\gets(i,j,\sigma)$ and go to Step $7$. Otherwise, returns $\NULL$.
    \item[Step $7$.] Search for the first $\pm (k-1)$-substring on the left from $t$ at distance at most $d$, i.e. 
    
    $v=(i,j,\sigma)\gets\FindFirst{k-1}(\max(l,t-d+1),t),\{+1,-1\},left)$
    
    If  $v\neq \NULL$, then $(i_2,j_2,\sigma_2)\gets(i,j,\sigma)$ and go to Step $8$. Otherwise, returns $\NULL$.
    \item[Step $8$.] If $\sigma_1=\sigma_2$, $\sigma_1\in s$ and $\max(j_1,j_2)-\min(i_1,i_2)+1\leq d$ , the subroutine returns $(\min(i_1,i_2),\max(j_1,j_2),\sigma_1)$, otherwise returns $\NULL$. 
\end{description}

By construction and induction on $k$, the two $\pm(k-1)$-substrings $x[i_1,j_1]$ and $x[i_2,j_2]$ (if they exist) involved in the procedure $\FindFrom{k}$ are always consecutive and minimal. $\FindFrom{k}$ thus returns a $\pm k$-substring, if both substrings have the same sign.

Using this basic procedure, we then search for a $\pm k-$substring by searching for a $t$ and $d$ such that
$\FindFrom{k}(l,r,t,d,s)$
returns a non-$\NULL$ value. Unfortunately,
our algorithms have two-sided bounded error: they
can, with small probability, return $\NULL$ even if a substring exists or return a wrong substring instead
of $\NULL$. In this setting, Grover's search algorithm
is not directly applicable and we need to use a more
sophisticated search~\cite{Hoyer03}. Furthermore, simply applying
the search algorithm naively does not give the right complexity.
Indeed, if we search for a substring of length roughly $d$
(say between $d$ and $2d$), we can find one with expected running time $O(\sqrt{(r-l)/d})$
because at least $d$ values of $t$ will work.
On the other hand, if there are no such substrings, the expected running time will be 
$O(\sqrt{r-l})$. Intuitively, we can do better
because if there is a substring of length at least $d$ then there are at least $d$ values of $t$ that work. Hence, we only need to distinguish between no solutions, or at least $d$. This allows to stop the Grover iteration early and make
$O(\sqrt{(r-l)/d})$ queries in all cases.

\begin{lemma}[Modified from \cite{Hoyer03}, Appendix~\ref{sec:bounded}]\label{lem:modified_hoyer03}
    Given $n$ algorithms, quantum or classical, each computing some bit-value with bounded error probability, and some $T\geqslant1$,
    there is a quantum algorithm that uses $O(\sqrt{n/T})$
    queries and with constant probability:
        returns the index of a ``1'', if there are at least $T$
            ``1s'' among the $n$ values;
        returns $\NULL$ if there are no ``1'';
        returns anything otherwise.
\end{lemma}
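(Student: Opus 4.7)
The plan is to combine the bounded-error amplitude amplification of Hoyer, Mosca and de Wolf with a single, truncated Grover round capped at a guess of $T$ marked elements, followed by a verification step.

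Concretely, the algorithm would (i) build an oracle $O_f$ that on input $\ket{i}$ uses the $i$-th algorithm to flip a phase whenever its output bit is $1$; (ii) run amplitude amplification around $O_f$ for $\Theta(\sqrt{n/T})$ iterations and then measure to get an index $\tilde i$; (iii) re-run the $\tilde i$-th algorithm a constant number of times with majority voting to drive its error below a small constant; (iv) return $\tilde i$ if the verification outputs $1$, and $\NULL$ otherwise. Step (ii) costs $O(\sqrt{n/T})$ queries and step (iii) costs $O(1)$, for a total of $O(\sqrt{n/T})$ queries.

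The correctness then splits into three cases. If at least $T$ inputs are ``$1$'', a Grover search with $\Theta(\sqrt{n/T})$ iterations measures a marked index with constant probability, and the verification confirms it with constant probability, so a valid index is returned. If there are no ``$1$''s, every candidate $\tilde i$ obtained in step (ii) satisfies $f_{\tilde i}=0$, so the majority-voted verification outputs $0$ with probability at least $2/3$ and the algorithm correctly outputs $\NULL$. If $0 < t < T$, the lemma permits any output, so nothing needs to be proved.

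The main obstacle, and the only place where \cite{Hoyer03} is genuinely needed rather than plain BBHT, is the bounded-error nature of the oracle in step (ii). Naively amplifying each call of an $f_i$-algorithm inside amplitude amplification to error $1/\poly(n)$ would multiply the query count by $\Theta(\log n)$, giving $O(\sqrt{n/T}\log n)$ instead of the desired $O(\sqrt{n/T})$. The result of \cite{Hoyer03} shows that amplitude amplification can be carried out directly with a constant-error oracle, losing only a constant factor in the query bound, which is precisely what is needed to shave off this logarithmic factor. Together with a Chernoff-style bound for the verification step, this yields the lemma.
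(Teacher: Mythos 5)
There is a genuine gap in step (ii). Running Grover/amplitude amplification for a \emph{fixed} number $\Theta(\sqrt{n/T})$ of iterations only guarantees a constant probability of measuring a marked index when the actual number $t$ of ones is within a constant factor of $T$. The lemma, however, must hold for \emph{every} $t\geq T$, and $t$ is unknown: the success probability after $k$ iterations is $\sin^2((2k+1)\theta_t)$ with $\sin^2\theta_t=t/n$, which oscillates as $t$ varies and can be made arbitrarily small for suitable $t\in[T,n]$ at your fixed $k$ (the usual overshooting phenomenon that motivates BBHT-style schedules). So the assertion ``if at least $T$ inputs are $1$, a Grover search with $\Theta(\sqrt{n/T})$ iterations measures a marked index with constant probability'' is false as stated, and your first correctness case breaks down; the verification step cannot repair this, because the failure mode is that no marked index is measured at all, not that a wrong one gets confirmed.

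The missing idea is precisely how the paper handles the unknown $t$: it keeps the main loop of H{\o}yer--Mosca--de Wolf, which runs a robust search $A_m$ with roughly $3^m$ iterations for $m=0,1,\ldots$, so that $A_m$ succeeds with constant probability whenever $t\in[n/9^{m+1},n/9^m]$ and each measured candidate is checked by rerunning the corresponding bounded-error algorithm, and simply truncates the loop at $m\leq\log_9(n/T)$. All $t\geq T$ are then covered by some $A_m$, an instance with no ones essentially never yields a verified candidate, and the cost telescopes to $\sum_m O(3^m)=O\left(3^{\log_9(n/T)}\right)=O(\sqrt{n/T})$. (An alternative fix within your framework is to choose the number of iterations uniformly at random from $\{0,\ldots,\lceil c\sqrt{n/T}\rceil\}$, which gives constant success probability simultaneously for all $t\geq T$.) Your use of \cite{Hoyer03} to run amplification directly with constant-error oracles, avoiding the $\Theta(\log n)$ overhead, is the right ingredient and matches the paper; what is missing is the iteration schedule over the unknown, possibly much larger than $T$, number of solutions.
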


The algorithm that uses above ideas is presented in Algorithm \ref{alg:fixedlenk}.

\begin{algorithm}[ht]
    \caption{$\FindFixedLen{k}(l,r,d,s)$. Search for any $\pm k$-substring of length $\in[d/2,d]$\label{alg:fixedlenk}}
    \begin{algorithmic}
        \State Find $t$ such that $v_t\gets\FindFrom{k}(l,r,t,d,s)\neq\NULL$ using Lemma~\ref{lem:modified_hoyer03} with $T=d/2$.
        \State \Return{$v_t$ or $\NULL$ if none}.
    \end{algorithmic}
\end{algorithm}


We can then write an algorithm $\FindAny{k}(l,r,s)$ that searches for any $\pm k$-substring. We consider a randomized algorithm that uniformly chooses a of power $2$ from $[2^{\lceil\log_2 k\rceil},  (r-l)]$, i.e.  $d\in\{2^{\lceil\log_2 k\rceil}, 2^{\lceil\log_2 k\rceil+1},\dots, 2^{\lceil\log_2 (r-l)\rceil}\}$. For the chosen $d$, we run Algorithm \ref{alg:fixedlenk}. So, the algorithm will succeed with probability at least $O(1/\log (r-l))$. We can apply Amplitude amplification and ideas from Lemma \ref{lem:modified_hoyer03} to this and get an algorithm that uses $O(\sqrt{\log (r-l)})$ iterations. 

\begin{algorithm}[ht]
    \caption{$\FindAny{k}(l,r,s)$. Search for any $\pm k$-substring.\label{alg:substringk_any}}
    \begin{algorithmic}
        \State Find $d\in\{2^{\lceil\log_2 k\rceil}, 2^{\lceil\log_2 k\rceil+1},\dots, 2^{\lceil\log_2 (r-l)\rceil}\}$ such that:\\ $v_d\gets\FindFixedLen{k}(l,r,d,s)\neq\NULL$ using amplitude amplification.
        \State \Return{$v_d$ or $\NULL$ if none}.
    \end{algorithmic}
\end{algorithm}

Finally, we present the algorithm that finds the first $\pm k$-substring -- $\FindFirst{k}$.  Let us consider the case $direction=right$. We first find the smallest segment from the left to the right such that its length $w$ is a power of $2$ and it contains a $\pm k$-substring. We do so by doubling the length of the segment until we find a $\pm k$-substring. 
We now have a segment that contains a $\pm k$-substring and we want to find the leftmost one. We do so by the following variant of binary search.
 At each step let $mid = \lfloor(lBorder+ rBorder)/2 \rfloor$ be the middle of the search segment $[lBorder, rBorder]$. There are three cases:
\begin{itemize}
\item There is a $k$-substring in $[lBorder, mid]$, then the leftmost $k$-substring is in this segment. 
\item There are no $k$-substrings in $[lBorder, mid]$, but $mid$ is inside a $k$-substring. Then the leftmost $k$-substring that contains $mid$ is the required substring.
\item There are no $k$-substrings in $[lBorder, mid]$ and $mid$ is not inside a $k$-substring. Then the required substring is in $[mid+1, rBorder]$.
\end{itemize}
Each iteration of the loop the algorithm halves the search space or finds the first $k$-substring itself if it contains $mid$.
If $direction=left$, we replace $\FindFrom{k}$ by $\FindFromRight{k}$ that finds the rightmost $\pm k$-substring that containts $mid$.
A detailed description of this algorithm is presented in Appendix~\ref{apx:findfirst}.

\begin{proposition}\label{pr:findfrom}
    For any $\varepsilon>0$ and $k$, algorithms $\FindFrom{k}$, $\FindFixedLen{k}$, $\FindAny{k}$ and $\FindFirst{k}$ have two-sided error probability $\varepsilon<0.5$ and return, when correct:
    \begin{itemize}
        \item If $t$ is inside a $\pm k-$substring of sign $s$ of length
            up to $d$ in $x[l,r]$, then
            $\FindFrom{k}$
            will return such a substring, otherwise it returns $\NULL$.
            The running time is $O(\sqrt{d}(\log (r-l))^{0.5(k-2)})$.
        \item $\FindFixedLen{k}$ either returns a
            $\pm k-$substring of sign $s$ and length at most $d$ in $x[l,r]$, or $\NULL$. It is only guaranteed to return a substring if there exists $\pm k-$substring of length at least $d/2$, otherwise it can return $\NULL$.
            The running time is $O(\sqrt{r-l}(\log (r-l))^{0.5(k-2)})$.
        \item $\FindAny{k}$ returns any
            $\pm k-$substring of sign $s$ in $x[l,r]$, otherwise it returns $\NULL$.
            The running time is $O(\sqrt{r-l}(\log (r-l))^{0.5(k-1)})$.
        \item $\FindFirst{k}$ returns the first $\pm k-$substring
            of sign $s$ in $x[l,r]$ in the specified direction, otherwise it returns $\NULL$.
            The running time is $O(\sqrt{r-l}(\log(r-l))^{0.5(k-1)})$.
    \end{itemize}
\end{proposition}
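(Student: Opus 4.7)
The plan is to prove all four claims simultaneously by strong induction on $k$, since the four procedures are mutually recursive: $\FindFrom{k}$ calls $\FindFrom{k-1}$, $\FindFromRight{k-1}$, and $\FindFirst{k-1}$; $\FindFixedLen{k}$ calls $\FindFrom{k}$; $\FindAny{k}$ calls $\FindFixedLen{k}$; and $\FindFirst{k}$ calls $\FindFrom{k}$ and $\FindFromRight{k}$. The base case $k=2$ is immediate: $\FindFrom{2}$ inspects only the three positions $x_{t-1}, x_t, x_{t+1}$, so it makes $O(1)$ queries with zero error, which subsumes the claimed $O(\sqrt{d})$ bound. From this, $\FindFixedLen{2}$, $\FindAny{2}$, and $\FindFirst{2}$ follow directly from Lemma~\ref{lem:modified_hoyer03} together with the amplitude-amplification and doubling arguments described in the pseudocode.

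For the inductive step I would analyze $\FindFrom{k}$ first. Correctness rests on the structural fact recorded at the start of the subsection: two consecutive minimal $\pm(k-1)$-substrings of the same sign combine into a $\pm k$-substring, and conversely a minimal $\pm k$-substring decomposes into two such pieces. Hence if $t$ lies inside a $\pm k$-substring of sign $s$ and length at most $d$, then $t$ is inside one of its two minimal $\pm(k-1)$ components (each of length $\le d-1$), and Steps 1--7 are designed to exhaust every way that the second component can be placed relative to the one containing $t$. Step 8 then certifies the required sign and overall length. For the running time, the procedure makes at most a constant number of inductive calls: each call to $\FindFrom{k-1}$ or $\FindFromRight{k-1}$ costs $O(\sqrt{d}(\log(r-l))^{0.5(k-3)})$ by the induction hypothesis, and each call to $\FindFirst{k-1}$ on a window of length at most $d$ costs $O(\sqrt{d}(\log d)^{0.5(k-2)})$. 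The latter dominates, giving the claimed $O(\sqrt{d}(\log(r-l))^{0.5(k-2)})$.

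For $\FindFixedLen{k}$, I would instantiate Lemma~\ref{lem:modified_hoyer03} with $n=r-l+1$, $T=d/2$, and the bit-valued subroutine $t \mapsto \mathbf{1}[\FindFrom{k}(l,r,t,d,s)\neq\NULL]$. If some $\pm k$-substring in $x[l,r]$ has length in $[d/2,d]$, every position it contains is a ``1'', so at least $d/2$ inputs yield a ``1'' and the lemma applies. Multiplying the $O(\sqrt{(r-l)/d})$ invocations by the cost of $\FindFrom{k}$ gives $O(\sqrt{r-l}(\log(r-l))^{0.5(k-2)})$. For $\FindAny{k}$, exactly one of the $O(\log(r-l))$ dyadic choices of $d$ brackets the length of the shortest $\pm k$-substring, so each random choice succeeds with probability $\Omega(1/\log(r-l))$; amplitude amplification adds a $\sqrt{\log(r-l)}$ overhead, producing $O(\sqrt{r-l}(\log(r-l))^{0.5(k-1)})$. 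Finally, for $\FindFirst{k}$ the doubling phase calls $\FindAny{k}$ on windows of sizes $1,2,4,\ldots$ until a substring is found, and the binary-search phase halves a window of size $w$ at each step, using $\FindAny{k}$ on one half and $\FindFrom{k}$ (or $\FindFromRight{k}$) at the midpoint; both totals are geometric in the window size and dominated by the last $\FindAny{k}$ call, matching $O(\sqrt{r-l}(\log(r-l))^{0.5(k-1)})$.

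The main obstacle is error propagation: each recursive layer is only bounded-error, and the outer searches invoke their subroutines polynomially many times, so a naive union bound blows up. I would handle this in the standard way, boosting every subroutine to error $1/\poly(r-l)$ by $O(\log(r-l))$-fold majority voting before embedding it inside the next level of search. This changes only constants hidden in the query bounds, and a union bound over the $O(\log n)$ depth of the recursion then keeps the total error below any prescribed $\varepsilon<1/2$, as required.
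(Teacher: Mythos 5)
Your overall structure (simultaneous induction on $k$, the decomposition of a minimal $\pm k$-substring into two consecutive minimal $\pm(k-1)$-substrings for $\FindFrom{k}$, the instantiation of Lemma~\ref{lem:modified_hoyer03} with $T=d/2$ for $\FindFixedLen{k}$, the dyadic bracketing plus amplitude amplification for $\FindAny{k}$, and the doubling-then-binary-search for $\FindFirst{k}$) matches the paper's proof. The genuine gap is in your error-propagation step. You propose to boost every subroutine to error $1/\poly(r-l)$ by $O(\log(r-l))$-fold majority voting before placing it inside the next level of search, and you claim this ``changes only constants hidden in the query bounds.'' It does not: the recursion in $k$ has depth $\Theta(k)$, and a $\Theta(\log(r-l))$ repetition factor at each level multiplies the cost by $(\log(r-l))^{\Theta(k)}$ overall, turning the claimed $O\left(\sqrt{d}\,(\log(r-l))^{0.5(k-2)}\right)$ bound for $\FindFrom{k}$ into roughly $(\log(r-l))^{1.5(k-2)}$ and likewise inflating all downstream bounds (and hence the $O(\sqrt{n}(\log n)^{0.5k})$ bound of Theorem~\ref{th:upper_bound}). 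So as written your argument proves a weaker statement than the proposition.

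The paper avoids this blow-up with three different mechanisms, none of which is uniform logarithmic boosting. First, $\FindFrom{k}$ makes only a constant number (at most $5$) of recursive calls, so a constant number of repetitions (depending on $\varepsilon$) restores error $\varepsilon$ at each level. Second, for $\FindFixedLen{k}$ and the amplitude-amplification layers, the point of Lemma~\ref{lem:modified_hoyer03} (and of citing \cite{Hoyer03}) is precisely that quantum search tolerates bounded-error input subroutines with no logarithmic overhead, so the inner calls need no boosting at all; your remark that ``a naive union bound blows up'' is the problem this lemma is designed to sidestep. Third, inside the binary search of $\FindFirst{k}$ the paper repeats the calls at the $u$-th iteration $2u$ times and takes a majority: the failure probability at iteration $u$ is then at most $(2e\varepsilon)^u$, so the total error sums to a small constant over all $\log_2 w$ iterations, while the cost series $\sum_u u\sqrt{w\cdot 2^{-(u-1)}}$ still converges geometrically, preserving the $O(\sqrt{r-l}(\log(r-l))^{0.5(k-1)})$ bound. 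To repair your proof you should replace the uniform boosting by these (or equivalent) devices; the correctness analysis you give can then stand as is.
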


\begin{proof}
We prove the result by induction on $k$. The base case of $k=2$ is obvious because of simplicity of $\FindFrom{2}$ and $\FindFromRight{2}$ procedures.
We first prove the correctness of all the algorithms, assuming there are
no errors. At the end we  explain how to deal with the errors.

\textbf{We start with $\FindFrom{k}$:}
there are different cases to be considered when searching for a $+k$-substring $x[i,j]$ of length $\leq d$.
\begin{enumerate}
    \item Assume that there are $j_1$ and $i_2$ such that $i<j_1<i_2<j$, $|f(x[i,j_1])|=|f(x[i_2,j])|=k-1$ and $\sign(f(x[i,j_1]))=\sign(f(x[i_2,j]))\in s$.
        If $t\in\{i_2,\dots,j\}$, then the algorithm finds $x[i_2,j]$ in Step $1$ and the first invocation of $\FindFirst{k-1}$ in Step $3$  finds $x[i,j_1]$.
        If $t\in\{i,\dots,j_1\}$, then the algorithm finds $x[i,j_1]$ in Step $1$ and the  second invocation of $\FindFirst{k-1}$ in Step $5$ finds $x[i_2,j]$.
        If $j_1<t<i_2$, then the third invocation of $\FindFirst{k-1}$ in Step $6$  finds $x[i_2,j]$ and the  forth invocation of $\FindFirst{k-1}$ in Step $7$  finds $x[i,j_1]$.
    \item Assume that there are $j_1$ and $i_2$ such that $i<i_2<j_1<j$, $|f(x[i,j_1])|=|f(x[i_2,j])|=k-1$ and $\sign(f(x[i,j_1]))=\sign(f(x[i_2,j]))\in s$.
       If $t\in\{i,\dots,j_1\}$, then the algorithm finds $x[i,j_1]$ in Step $1$. After that, it finds $x[i_2,j]$ in Step $4$.
       If $t\in\{j_1+1,\dots,j\}$, then the algorithm finds $x[i_2,j]$ in Step $1$. After that, it finds $x[i,j_1]$ in Step $2$.
\end{enumerate}
By induction, the running time of each $\FindFrom{k-1}$ invocation is $O(\sqrt{d}(\log(r-l))^{0.5(k-3)})$, and the running time of each $\FindFirst{k-1}$ invocation is $O(\sqrt{d}(\log(r-l))^{0.5(k-2)})$.

\textbf{We now look at $\FindFixedLen{k}$:} by
construction and definition of $\FindFrom{k}$, if the algorithm
returns a value, it is a valid substring (with high probability).
If there exists a substring of length at least $d/2$, then any
query to $\FindFrom{k}$ with a value of $t$ in this interval will 
succeed, hence there are at least $d/2$ solutions. Therefore,
by Lemma~\ref{lem:modified_hoyer03}, the algorithm will find one
with high probability and make $O\left(\sqrt{\tfrac{r-l}{d/2}}\right)$ queries. 
Each query has complexity $O(\sqrt{d}(\log(r-l))^{0.5(k-2)})$
by the previous paragraph, hence the running time is bounded by
$O(\sqrt{r-l}(\log(r-l)^{0.5(k-2)})$.

\textbf{We can now analyze $\FindAny{k}$:}  Assume that the shortest $\pm k$-substring $x[i,j]$ is of length $g=j-i+1$. Therefore, there is a $d$ such that $d\leq g \leq 2d$ and the $\FindFixedLen{k}$ procedure  returns a substring for this $d$ with constant success probability. So, the success probability of the randomized algorithm is at least $O(1/\log (l-r))$. Therefore, the amplitude amplification does $O(\sqrt{\log(r-l)})$ iterations. 
The running time of $\FindFixedLen{k}$ is $O(\sqrt{r-l}(\log (r-l))^{0.5(k-2)})$ 
by induction, hence
the total running time is 
$O(\sqrt{r-l}(\log(r-l))^{0.5(k-2)}\sqrt{\log (l-r)})=O(\sqrt{r-l}(\log (r-l))^{0.5(k-1)})$.

\textbf{Finally, we analyze $\FindFirst{k}$:}
See Appendix~\ref{apx:findfirst}.

\textbf{We now turn to error analysis.}
The case of $\FindFrom{k}$ is easy:
the algorithm makes at most $5$ recursive calls, each having a success
probability of $1-\varepsilon$. Hence it will succeed with probability
$(1-\varepsilon)^5$. We can boost this probability to $1-\varepsilon$
by repeating this algorithm a constant number of times. Note that this constant depends on $\varepsilon$.

The analysis of $\FindFixedLen{k}$ follows from \cite{Hoyer03}
and Lemma~\ref{lem:modified_hoyer03}:
since $\FindFrom{k}$ has two-sided error $\varepsilon$, there exists
a search algorithm with two-sided error $\varepsilon$.
\end{proof}


\subsection{The Algorithm for \texorpdfstring{$\dyck_{k,n}$}{DYCK\_\{n,k\}}}\label{sec:dyck-algo}

To solve $\dyck_{k,n}$, we modify the input $x$. As the new input we use $x'=1^k x  0^k$. $\dyck_{k,n}(x)=1$ iff there are no $\pm(k+1)$-substrings in $x'$. This idea is presented in Algorithm \ref{alg:dycknh}. 
\begin{algorithm}[ht]
    \caption{$\textsc{Dyck}_{k,n}(x)$. The Quantum Algorithm for $\dyck_{k,n}$.\label{alg:dycknh}}
    \begin{algorithmic}
        \State $x\gets 1^k  x  0^k$
        \State $v=\FindAny{(k+1)}(0,n+2k-1, \{+1,-1\})$
        \State \Return{$v==\NULL$}
    \end{algorithmic}
\end{algorithm}

\begin{theorem}[Appendix~\ref{sec:Dyck-n,k}]\label{th:upper_bound} Algorithm \ref{alg:dycknh} solves $\dyck_{k,n}$  and
the running time of Algorithm \ref{alg:dycknh} is $O(\sqrt{n}(\log n)^{0.5k})$. The algorithm has two-side error probability  $\varepsilon<0.5$.
\end{theorem}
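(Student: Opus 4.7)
The plan has two parts: establish the correctness of the reduction to $\pm(k+1)$-substring search in the modified word $x' = 1^k x 0^k$, and then invoke Proposition~\ref{pr:findfrom} to read off the running time and error guarantees.

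For correctness, I would prove the equivalence: $\dyck_{k,n}(x) = 1$ if and only if $x'$ contains no $\pm(k+1)$-substring. The ``only if'' direction is the cleanest: assuming $x \in L_k$, I compute all prefix balances of $x'$ and show they lie in $[-k, 0]$. Indeed, the initial $1^k$ block drives the balance from $0$ down to $-k$; inside the $x$-block the prefix balance is $-k + f(x[0,i])$ with $f(x[0,i]) \in [0,k]$, so it lies in $[-k,0]$; and in the final $0^k$ block the balance rises from $-k + f(x) = -k$ back to $0$. Since the balance of any substring $x'[a,b]$ is a difference $f(x'[0,b]) - f(x'[0,a-1])$ of two such prefix balances, every substring balance lies in $[-k, k]$, ruling out any $\pm(k+1)$-substring.

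The ``if'' direction proceeds by contrapositive, using the three failure modes listed at the start of Section~\ref{sec:dyck-algo}. If $x$ contains a $+(k+1)$-substring, it lifts directly to one in $x'$. If some prefix $x[0,i]$ has $f(x[0,i]) = -1$, the substring $x'[0, k+i] = 1^k x[0,i]$ has balance $-(k+1)$. If $f(x) \neq 0$, then the substring $x 0^k$ (when $f(x) > 0$) or $1^k x$ (when $f(x) < 0$) of $x'$ has absolute balance at least $k+1$. A small supporting lemma I would state and use throughout is that any substring of absolute balance at least $k+1$ contains a substring of balance exactly $\pm(k+1)$; this follows from the fact that the partial-sum balance changes by $\pm 1$ at each index, so an intermediate value theorem on integer-valued prefix sums applies.

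For the running time, I would directly instantiate Proposition~\ref{pr:findfrom} with the call $\FindAny{(k+1)}(0, n+2k-1, \{+1,-1\})$: the range length is $n + 2k$, and since $\dyck_{k,n}$ is trivial when $k \ge n/2$ (each character contributes $\pm 1$ so absolute balance of any substring is bounded by its length), we may assume $k \le n$, giving $\log(n+2k) = O(\log n)$. The stated bound $O(\sqrt{n}(\log n)^{0.5k})$ follows. The two-sided error bound $\varepsilon < 0.5$ is inherited verbatim from Proposition~\ref{pr:findfrom}. I expect the main obstacle to be writing the case analysis of the ``if'' direction cleanly so that every failure mode of $\dyck_{k,n}$ is converted into a witness substring of balance exactly $\pm(k+1)$ in $x'$; the rest of the argument is a direct appeal to the results of Section~\ref{sec:substr}.
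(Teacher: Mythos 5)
Your proof is correct and takes essentially the same approach as the paper: both reduce to the claim that $\dyck_{k,n}(x) = 1$ iff $x' = 1^k x 0^k$ contains no $\pm(k+1)$-substring, then read off the complexity from Proposition~\ref{pr:findfrom} applied to the call $\FindAny{k+1}$ on a window of length $n + 2k$, using the observation that one may assume $k = O(n)$ so $\log(n+2k) = O(\log n)$. If anything your write-up is slightly more complete: you establish both directions of the equivalence, with a clean prefix-balance-in-$[-k,0]$ argument for the forward direction and the intermediate-value observation for converting any substring of large absolute balance into one of balance exactly $\pm(k+1)$, whereas the paper's appendix only explicitly checks that a $\pm(k+1)$-substring of $x'$ witnesses a violation of one of the three $\dyck_{k,n}$ conditions, leaving the converse as implicit.
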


\section{Lower bounds for Dyck languages}
\label{s:dyck}

\begin{theorem}
\label{t:dycklowerbound-power}
There exist constants $c_{1},c_{2}>0$ such that $Q\left(\dyck_{c_{1}\ell m,c_{2}\left(2m\right)^{\ell}}\right)=\Omega\left(m^{\ell}\right)$.
\end{theorem}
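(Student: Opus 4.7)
My plan is to reduce an iterated exact-count problem to $\dyck_{c_{1}\ell m,\,c_{2}(2m)^{\ell}}$ and then invoke the exact composition of the general adversary bound. Take as the base function $g=\ex{2m}{m}{m+1}$, the standard promise problem of distinguishing Hamming weight $m$ from $m+1$ on $2m$ bits; it is well known that $Adv^{\pm}(g)=\Theta(m)$. By the exact composition of $Adv^{\pm}$ (stated in Section~\ref{s:defs}), $Adv^{\pm}(g^{\ell})=\Theta(m^{\ell})$, and by Reichardt's theorem $Q(g^{\ell})=\Theta(m^{\ell})$. It therefore suffices to exhibit a reduction $g^{\ell}\leqslant\dyck_{c_{1}\ell m,\,c_{2}(2m)^{\ell}}$.

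I would build the reduction from a recursive parenthesis-string gadget $G_{\ell}(Y)$ in which every input bit of $Y$ appears at exactly one position and all other positions are fixed, so that one query to $G_{\ell}(Y)$ costs at most one query to $Y$. Under the convention $0\mapsto\texttt{(}$ and $1\mapsto\texttt{)}$, fix $K=m+1$ and set
\[
G_{1}(y)\;=\;\underbrace{((\cdots(}_{K}\;y_{1}y_{2}\cdots y_{2m}\;(\;\underbrace{))\cdots)}_{K},
\]
and for $\ell\ge 2$, with the natural block partition $Y=(Y^{(1)},\dots,Y^{(2m)})$,
\[
G_{\ell}(Y)\;=\;\underbrace{((\cdots(}_{K}\;G_{\ell-1}(Y^{(1)})\cdots G_{\ell-1}(Y^{(2m)})\;(\;\underbrace{))\cdots)}_{K}.
\]
The reduction then outputs $\phi(Y)=G_{\ell}(Y)\,\texttt{)}$.

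The heart of the proof is an induction on $\ell$ maintaining the following simultaneous invariant for every valid $g^{\ell}$-input $Y$: (i) the total balance of $G_{\ell}(Y)$ is $+1$ when $g^{\ell}(Y)=0$ and $-1$ when $g^{\ell}(Y)=1$; (ii) the minimum prefix balance of $G_{\ell}(Y)$ is $0$ in the first case and $-1$ in the second; and (iii) the maximum prefix balance is at most $(2m+1)\ell+1$. Two design choices drive this invariant. The single extra $\texttt{(}$ placed before the closing padding shifts what would otherwise be the even sub-balance sums $\{0,-2\}$ into the desired $\{+1,-1\}$. The wrapping constant $K=m+1$ then exactly absorbs the worst-case early accumulation of $-1$-valued sub-outputs (at most $m$ of them when $g^{\ell}(Y)=0$), together with the extra local dip of $-1$ that each such sub contributes by the inductive hypothesis. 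The length recurrence $L_{\ell}=2m\,L_{\ell-1}+O(m)$ and depth recurrence $M_{\ell}=M_{\ell-1}+O(m)$ then yield $|\phi(Y)|\le c_{2}(2m)^{\ell}$ and depth $\le c_{1}\ell m$ for suitable constants $c_{1},c_{2}$.

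The step I expect to be most delicate is verifying (ii): that every good $\phi(Y)$ really stays non-negative at every prefix, and that every bad $\phi(Y)$ is caught by some strictly negative prefix, not merely by its imbalanced final character. This is precisely what forces $K\ge m+1$ (any smaller value breaks non-negativity in the $+1$ case) and what forces exactly one extra $\texttt{(}$ per level (none would collapse the two cases to the same balance parity, while two or more would again break non-negativity). Once (i)--(iii) are in place, $\phi(Y)\in L_{c_{1}\ell m}$ exactly when $g^{\ell}(Y)=0$, giving
\[
Q\!\left(\dyck_{c_{1}\ell m,\,c_{2}(2m)^{\ell}}\right)\;\ge\;\Omega(Q(g^{\ell}))\;=\;\Omega(m^{\ell}).
\]
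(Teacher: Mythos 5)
Your overall strategy is the paper's: take $g=\exnn$, use $Adv^{\pm}(g)>m$ and the exact composition of $Adv^{\pm}$ for partial functions to get $Q(g^{\ell})=\Omega(m^{\ell})$, and then reduce $g^{\ell}$ to bounded-depth Dyck. The adversary part is fine; the gap is in the gadget, and it is not merely typographical. With your own convention $0\mapsto\texttt{(}$, $1\mapsto\texttt{)}$ and $f(x)=|x|_0-|x|_1$, the base block $G_1(y)=\texttt{(}^{K}\,y\,\texttt{(}\,\texttt{)}^{K}$ has balance $f(y)+1\in\{3,1\}$ (it is $3$ when $g(y)=0$, since then $f(y)=2$), never $\pm1$, so invariant (i) already fails at the base. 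The narrative justification rests on a sign error: under the promise the sub-balance sums are $\{+2,0\}$, not $\{0,-2\}$, so an extra opening parenthesis pushes them to $\{3,1\}$ rather than to $\{+1,-1\}$. More fundamentally, your wrapper adds only $O(1)$ symbols per level, so nothing discharges the $\Theta(m)$ positive drift that the promise creates at each level: the recursion gives $f(G_{\ell})=1+\sum_{i}f\left(G_{\ell-1}(Y^{(i)})\right)$ with $f(G_1)\in\{1,3\}$, hence $f(G_{\ell})=\Theta\left((2m)^{\ell-1}\right)$, and the maximum prefix balance is at least as large, violating (iii) and the required depth bound $c_1\ell m$. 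The single trailing $\texttt{)}$ in $\phi(Y)$ cannot repair a balance of that order, so $\phi(Y)$ is far from balanced in both the $g^{\ell}(Y)=0$ and $g^{\ell}(Y)=1$ cases, $\dyck$ rejects both, and the reduction does not distinguish them.

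What is missing is the per-level rebalancing that the paper builds in: encode the leaves as $0\mapsto 00$ and $1\mapsto 01$ (so leaf blocks have balance $2$ or $0$ and never dip below $0$), and at every level append $1^{2m}$, i.e.\ $2m$ closing parentheses, after the $2m$ sub-blocks. Under the promise this exactly cancels the drift, so every block keeps balance in $\{0,2\}$ and all prefix balances in $[0,h]$, while the height grows only by $2(m+1)$ per level and the width by a factor $O(m)$, giving depth $O(\ell m)$ and length $O\left((2m)^{\ell}\right)$; the top-level word is then in the Dyck language iff the composed function evaluates to $1$. Your $\pm1$-balance plan can be salvaged, but only by closing one \emph{extra} parenthesis per block rather than opening one, e.g.\ $G_1(y)=\texttt{(}^{K}\,y\,\texttt{)}^{K}\texttt{)}$ with the analogous wrapper at inner levels and a single prepended $\texttt{(}$ at the very end; this is essentially the paper's reduction in a different normalization. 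As written, though, the construction fails both the balance and the depth requirements, so the claimed reduction $g^{\ell}\leqslant\dyck_{c_1\ell m,\,c_2(2m)^{\ell}}$ is not established.
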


\begin{proof}
We will use the partial Boolean function $\ex{m}{a}{b}=\begin{cases}
1, & \text{if \ensuremath{\left|x\right|_{0}=a}}\\
0, & \text{if \ensuremath{\left|x\right|_{0}=b.} }
\end{cases}$ 

We prove the theorem by a reduction $\left(\exnn\right) ^{\ell}\leqslant\dyck_{c_{1}\ell m,c_{2}\left(2m\right)^{\ell}}$, with
the reduction described in Appendix~\ref{sec:reduction}.
It is known that $Adv^\pm\left(\exnn\right) \geq Adv\left(\exnn\right)>m$ \cite[Theorem~5.4]{ambainis2002quantum}.
The Adversary bound composes even for partial Boolean functions \cite[Lemma~1]{kimmel2012quantum}, therefore $Q\left( \left(\exnn\right)^{\ell}\right)=\Omega\left(m^{\ell}\right)$.
Via the reduction the same bound applies to $\dyck_{c_{1}\ell m,c_{2}\left(2m\right)^{\ell}}$.
\end{proof}

\begin{theorem}
\label{t:dycklowerbound}
For any $\epsilon > 0$, there exists $c>0$ such that $Q\left(\dyck_{c\log n,n}\right)=\Omega\left(n^{1-\epsilon}\right).$
\end{theorem}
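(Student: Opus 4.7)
The plan is to derive Theorem \ref{t:dycklowerbound} from Theorem \ref{t:dycklowerbound-power} by choosing the free parameters $m$ and $\ell$ appropriately. In the bound $Q(\dyck_{c_1\ell m,\,c_2(2m)^\ell}) = \Omega(m^\ell)$, the length grows as $(2m)^\ell$ while the lower bound grows only as $m^\ell$; their ratio is essentially $n^{\log m/\log(2m)}$, an exponent that tends to $1$ as $m\to\infty$. So fixing $m$ as a large constant depending on $\epsilon$ and letting $\ell$ grow logarithmically with $n$ should produce an $\Omega(n^{1-\epsilon})$ lower bound at depth $\Theta(\log n)$.

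Concretely, given $\epsilon>0$, I would first choose an integer constant $m=m(\epsilon)$ so large that $\log m/\log(2m)\geq 1-\epsilon$; this is possible because the ratio tends to $1$ (e.g.\ any $m\geq 2^{(1-\epsilon)/\epsilon}$ works). Then, for given $n$, set $\ell=\lfloor\log_{2m}(n/c_2)\rfloor$, so that $N:=c_2(2m)^\ell\leq n$. Theorem \ref{t:dycklowerbound-power} yields $Q(\dyck_{c_1\ell m,\,N})=\Omega(m^\ell)$, and since $\ell\geq \log_{2m}(n/c_2)-1$ one gets
\[
m^\ell \;\geq\; \tfrac{1}{m}\,(n/c_2)^{\log m/\log(2m)} \;=\; \Omega\!\left(n^{1-\epsilon}\right).
\]
A trivial padding of the input (appending $\texttt{()}\cdots\texttt{()}$ of length $n-N$, which does not change the maximum depth) reduces $\dyck_{c_1\ell m,N}$ to $\dyck_{c_1\ell m,n}$, so the lower bound survives. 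Finally I would take $c=c_1 m/\log(2m)$ as the constant in the theorem, which ensures $c_1\ell m \leq c\log n$ for all sufficiently large $n$.

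The last step I would have to justify is that raising the depth parameter from $c_1\ell m$ up to $c\log n$ does not destroy the separation: the same hard instance must still distinguish YES- from NO-inputs of $\dyck_{c\log n,n}$. For this I would appeal to the structure of the reduction described in Appendix \ref{sec:reduction}: the $1$-inputs of $(\exnn)^\ell$ should map to valid Dyck words of depth at most $c_1\ell m \leq c\log n$ (hence still YES-inputs at the larger depth), while the $0$-inputs should map to strings with unequal numbers of opening and closing parentheses, i.e.\ words outside every $L_{k'}$. This monotonicity-in-depth check is the main technical point I expect to need; once it is in place, the rest of the argument is just arithmetic on exponents, selecting the constants $m$ and $c$ as functions of $\epsilon$, and a standard padding step.
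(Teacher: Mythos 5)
Your proposal is correct and follows essentially the same route as the paper: fix a constant $m=m(\epsilon)$ large enough that $m^{\ell}\geq((2m)^{\ell})^{1-\epsilon}$, set $\ell\approx\log_{2m}n$, and invoke Theorem~\ref{t:dycklowerbound-power}, which is exactly what the paper does (it simply writes ``without loss of generality $(2m)^{\ell}=n$''). Your additional care with rounding, padding by balanced pairs, and the check that yes-instances stay valid and no-instances stay unbalanced when the depth parameter is raised to $c\log n$ just makes explicit details the paper leaves implicit.
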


\begin{proof}
For any $\epsilon>0$, there exists an $m$ such that $Adv^\pm\left(\ex{2m}{m}{m+1}\right)\geq\left(2m\right)^{1-\epsilon}$.
Without loss of generality we may assume that $(2m)^\ell=n$. From Theorem \ref{t:dycklowerbound-power} with $\ell=\log_{2m}n$ we obtain  $c_2\left(2m\right)^{\ell}=c_2 n$
and height $c_1 m\ell=\Theta\left(\log n\right)$. The query complexity
is at least $\left(\left(2m\right)^{1-\epsilon}\right)^{\ell}=\left(\left(2m\right)^{\ell}\right)^{1-\epsilon}=n^{1-\epsilon}$.
Therefore $Q\left(\dyck_{c \log n,n}\right)=\Omega\left(n^{1-\epsilon}\right)$.
\end{proof}

For constant depths the following bound can be derived:
\begin{theorem}
There exists a constant $c_1>0$ such that $Q(\dyck_{c_1\ell,n})=\Omega(2^{\frac{\ell}{2}}\sqrt{n}).$
\end{theorem}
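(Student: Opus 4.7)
The plan is to amplify Theorem~\ref{t:dycklowerbound-power} by composing it with a large $\andf$, using the fixed choice $m=4$. With $m=4$, Theorem~\ref{t:dycklowerbound-power} yields a reduction $(\ex{8}{4}{5})^{\ell} \leqslant \dyck_{4 c_1 \ell,\, c_2 \cdot 8^{\ell}}$, and exact composition of the adversary bound together with $Adv^{\pm}(\ex{8}{4}{5}) \geq Adv(\ex{8}{4}{5}) > 4$ gives $Adv^{\pm}\bigl((\ex{8}{4}{5})^{\ell}\bigr) = \Omega(4^{\ell})$. The specific value $m=4$ is dictated by the identity $4^{\ell}/\sqrt{8^{\ell}} = 2^{\ell/2}$, which is precisely what produces the desired $2^{\ell/2}$ prefactor after amplification.

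Next I would build a reduction $\andf_{N} \circ (\ex{8}{4}{5})^{\ell} \leqslant \dyck_{4 c_1 \ell,\, N c_2 \cdot 8^{\ell}}$ for every $N$, by running the Theorem~\ref{t:dycklowerbound-power} reduction in parallel on each of the $N$ inputs and then concatenating the resulting $N$ parenthesis words. The key property to check is that the reduction underlying Theorem~\ref{t:dycklowerbound-power} is balance-preserving: both YES and NO instances should be properly balanced words, the YES case having depth at most $4 c_1 \ell$ and the NO case exceeding this depth somewhere. Given this, concatenating balanced words produces a balanced word whose maximum depth equals the maximum over the pieces, so the concatenation lies in $L_{4 c_1 \ell}$ iff every piece does, which is exactly $\andf_{N} \circ (\ex{8}{4}{5})^{\ell}$.

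Finally I would set $N = n/(c_2 \cdot 8^{\ell})$ and combine this reduction with exact composition of $Adv^{\pm}$ and the Reichardt bound $Q = \Theta(Adv^{\pm})$:
\[
Q\bigl(\dyck_{4 c_1 \ell,\, n}\bigr) \;\geq\; \Omega\bigl(Adv^{\pm}(\andf_{N}) \cdot Adv^{\pm}((\ex{8}{4}{5})^{\ell})\bigr) \;=\; \Omega\bigl(\sqrt{N} \cdot 4^{\ell}\bigr) \;=\; \Omega\bigl(\sqrt{n/c_2} \cdot 4^{\ell}/8^{\ell/2}\bigr) \;=\; \Omega\bigl(2^{\ell/2}\sqrt{n}\bigr).
\]
Renaming $c_{1}' := 4 c_{1}$ then gives the theorem as stated.

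The main obstacle I expect is verifying the balance-preserving property of the explicit reduction hidden inside Theorem~\ref{t:dycklowerbound-power}; without this, the concatenation step does not faithfully realise the $\andf_{N}$, since concatenating non-balanced strings can create or cancel extra depth in unintended ways. Inspecting the construction of Appendix~\ref{sec:reduction} to confirm that both YES and NO instances are properly balanced (so that the maximum depth is the only distinguishing feature) should be routine but is the one nontrivial point of the argument.
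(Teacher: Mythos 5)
Your overall route is essentially the paper's: fix $m=4$ in Theorem~\ref{t:dycklowerbound-power}, use exact composition of $Adv^\pm$ with a large $\andf$, and realise the $\andf$ by concatenating Dyck instances; the only structural difference (composing $\andf$ with $\left(\ex{8}{4}{5}\right)^{\ell}$ and reducing the whole composition to Dyck, rather than composing $\andf$ with $\dyck_{c_1\ell,c_2 8^{\ell}}$ itself) is immaterial. However, the property you single out as the one point needing verification is stated incorrectly: the reduction of Appendix~\ref{sec:reduction} does \emph{not} map NO instances to balanced words of excessive depth. It maps an input of $\left(\ex{8}{4}{5}\right)^{\ell}$ to a block $B$ whose prefix balances stay in $[0,h]$ with $h=O(m\ell)$ and whose final balance $\im(B)$ is $0$ in the YES case and $2$ in the NO case; NO instances are rejected by $\dyck$ because they are unbalanced, not because they are too deep. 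Fortunately this is exactly what makes your concatenation step work: since each block has nonnegative final balance, the concatenation $B_1\cdots B_N$ has total balance $2\cdot\#\{\text{NO blocks}\}$, so it is balanced iff every block is a YES instance, and in that case all prefix balances remain at most $h\le c_1'\ell$; hence the concatenated word lies in $L_{c_1'\ell}$ iff $\andf_N$ of the instances is $1$. With that correction (balance, not depth, is the distinguishing feature), your argument goes through and matches the paper's proof, including the final arithmetic $\sqrt{N}\cdot 4^{\ell}=\Omega\left(2^{\ell/2}\sqrt{n}\right)$.
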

\begin{proof}
    Let $m=4$ in the Theorem \ref{t:dycklowerbound-power}. Then, $Q\left(\dyck_{c_{1}\ell,c_{2}8^{\ell}}\right)=\Omega\left(4^{\ell}\right)$ for some constants $c_{1},c_{2}>0$. Consider the function $\andf_\frac{n}{c_{2}8^{\ell}} \circ \dyck_{c_{1}\ell,c_{2}8^{\ell}}$ with a promise that $\andf_k$ has as an input either $k$ or $k-1$ ones. Then,
    \[ Q\left({\andf}_\frac{n}{c_{2}8^{\ell}} \circ \dyck_{c_{1}\ell,c_{2}8^{\ell}}\right) = \Theta\left(Adv^{\pm}\left({\andf}_\frac{n}{c_{2}8^{\ell}} \circ \dyck_{c_{1}\ell,c_{2}8^{\ell}}\right)\right) \mbox{~and} \]
    \[ Adv^{\pm}\left({\andf}_\frac{n}{c_{2}8^{\ell}} \circ \dyck_{c_{1}\ell,c_{2}8^{\ell}}\right) 
    \geq 
    Adv^{\pm}\left({\andf}_\frac{n}{c_{2}8^{\ell}}\right)  Adv^{\pm}\left(\dyck_{c_{1}\ell,c_{2}8^{\ell}}\right)
    =\Omega\left(2^\frac{\ell}{2}\sqrt{n}\right) ,\]
   with the second step following from the composition of $Adv^{\pm}$ for partial functions \cite{kimmel2012quantum}.
    This implies the same lower bound on $\dyck_{c_{1}\ell,n}$ because 
    the computation of the composition $\andf_\frac{n}{c_{2}8^{\ell}} \circ \dyck_{c_{1}\ell,c_{2}8^{\ell}}$ can be straightforwardly reduced to $\dyck_{c_{1}\ell,n}$ by a simple concatenation of $\dyck_{c_{1}\ell,c_{2}8^{\ell}}$ instances.
\end{proof}

\section{Quantum complexity of  \textsc{st-Connectivity} in grids}
\label{s:connectivity}

\subsection{Quantum complexity of \texorpdfstring{$\dirtwod_{n, k}$}{2D-DConnectivity}}
\label{ss:lbdirected}

\begin{theorem}
\label{t:dirconlowerbound}
For any $n\geq k$ and $\epsilon>0$,
$Q(\dirtwod_{n, k})=\Omega\left((\sqrt{n}k)^{1-\epsilon}\right)$.
\end{theorem}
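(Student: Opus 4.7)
I plan to reduce an appropriate Dyck (or composed $\ex{2m}{m}{m+1}$) instance to $\dirtwod_{n,k}$ and apply the lower bounds of Section~\ref{s:dyck}. The base reduction embeds a Dyck word into the directed grid by associating each character with an antidiagonal: if $x_{i+1}=\texttt{(}$, then every right-edge leaving a vertex on the antidiagonal $u+v=i$ is present and every up-edge is absent, and vice versa for $x_{i+1}=\texttt{)}$; structural edges (independent of $x$) block the unique path whenever it would leave the strip $0\le u-v\le k$. Padding by $1^kx\,0^k$ in the spirit of Algorithm~\ref{alg:dycknh} then makes the existence of a $(0,0)\to(n,k)$ path equivalent to membership of the padded word in $L_k$, and every grid-edge query is answered by a single word query. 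Hence $\dyck_{K,N}\leqslant\dirtwod_{n,k}$ whenever $K\le k$ and the word length $N$ fits in the embedding.

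Fix $\epsilon>0$ and pick $m$ large enough that $Adv^\pm(\ex{2m}{m}{m+1})\ge(2m)^{1-\epsilon}$, possible by \cite[Theorem~5.4]{ambainis2002quantum}. Combined with the exact composition of $Adv^\pm$~\cite{kimmel2012quantum}, Theorem~\ref{t:dycklowerbound-power} yields $Q(\dyck_{c_1\ell m,\,c_2(2m)^\ell})=\Omega((2m)^{\ell(1-\epsilon)})$. I select $\ell$ and $m$ so that (i) $c_1\ell m\le k$, (ii) $(2m)^\ell$ is at most the input length usable by the embedding, and (iii) $(2m)^{\ell(1-\epsilon)}\approx(\sqrt{n}k)^{1-\epsilon}$. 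In the moderate range $c\log n\le k\le\sqrt{n}$ the single-diagonal embedding with $N=\Theta(\sqrt{n}k)$ and $K=\Theta(\log n)$ already fits (since $N\le n+k$), and Theorem~\ref{t:dycklowerbound} delivers the bound directly. For $k>\sqrt{n}$ the Dyck length needed to reach $(\sqrt{n}k)^{1-\epsilon}$ exceeds the $n+k$ steps of a single monotone path, so I instead pack several Dyck sub-gadgets into the grid and glue them by an AND/OR-type composition, using multiplicativity of $Adv^\pm$ to aggregate the sub-bounds; this is equivalent to reducing from the iterated composition $(\ex{2m}{m}{m+1})^\ell$ with a suitably chosen $\ell=\Theta(\log_2(k/\sqrt{n})+1)$.

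The hardest step is the embedding in the regime $k\gg\sqrt{n}$. Because the directed grid admits only monotone paths, one cannot chain Dyck sub-instances edge-disjointly as one would in an undirected graph; instead the sub-gadgets must be interleaved along the antidiagonals so that the unique monotone path from $(0,0)$ to $(n,k)$ witnesses all of them simultaneously, mirroring the recursive structure used in the proof of Theorem~\ref{t:dycklowerbound-power}. Once this geometric layout is in place and each query to the composed input is verified to be answerable by $O(1)$ queries to the grid, the conclusion follows from $Q\ge Adv^\pm$~\cite{Reichardt11} combined with the composition theorem of~\cite{kimmel2012quantum}.
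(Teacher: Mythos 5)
Your encoding of a Dyck word along antidiagonals (each symbol opening a column of ``up'' or ``right'' edges, with forbidden edges insulating a bounded strip) is essentially the paper's gadget, and invoking the Dyck lower bounds of Section~\ref{s:dyck} is the right starting point. However, the way you aggregate the hardness has a genuine gap, and the quantitative claims in both regimes fail. In the range $k\le\sqrt n$ you assert that a single embedded instance of length $N=\Theta(\sqrt n\,k)$ ``fits since $N\le n+k$''. The binding constraint is not the number of steps of the path but the balance: in a directed (monotone) grid each symbol of the word is one step, a balanced word of length $N$ needs $N/2$ steps in the ``up'' direction, and only $k$ such steps exist. Hence a single Dyck instance embedded along the path of an $n\times k$ grid has length $O(k)$, not $\Theta(\sqrt n\,k)$, and Theorem~\ref{t:dycklowerbound} then yields only $\Omega(k^{1-\epsilon})$, far below the target $(\sqrt n\,k)^{1-\epsilon}$ when $k\ll n$. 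In the range $k>\sqrt n$ your fix is to interleave several sub-gadgets ``so that the unique monotone path from $(0,0)$ to $(n,k)$ witnesses all of them simultaneously''. That is an AND-type (serial) composition along one candidate path, and it cannot work even in principle: if the construction has a unique candidate path, the resulting function depends on only the $O(n+k)$ edge variables along that path, so its query complexity is at most $O(n+k)$, which is smaller than $(\sqrt n\,k)^{1-\epsilon}$ precisely in this regime (e.g.\ $k=n$ needs $n^{1.5-\epsilon}$). Serial concatenation also cannot help for smaller $k$, since the total balanced length along one path is still $O(k)$.

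The missing idea is a \emph{parallel OR}, not a serial AND. The paper takes $t=\Theta(n/\log k)$ vertex-disjoint trapezoid gadgets, each encoding an independent $\dyck_{d,m}$ instance with $m=\Theta(k)$ and $d=\Theta(\log k)$ (so each trapezoid spans the full height $k$ and only width $\Theta(\log k)$ horizontally), and wires them so that a $(0,0)\to(n,k)$ path exists iff at least one instance is a valid word. This realizes $\orf_t\circ\dyck_{d,m}\leqslant\dirtwod_{n,k}$, and by Reichardt's composition theorem $Q(\orf_t\circ\dyck_{d,m})=\Theta(\sqrt t\cdot Q(\dyck_{d,m}))=\Omega\left(\sqrt{n/\log k}\cdot k^{1-\epsilon}\right)=\Omega\left((\sqrt n\,k)^{1-\epsilon'}\right)$. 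The extra $\sqrt n$ factor comes exactly from the many \emph{alternative} candidate paths of the OR, which your single-path layout cannot provide; without this (or an equivalent mechanism) the proposal does not establish the theorem.
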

In particular, if we have a square grid then
\begin{cor}
For any $\epsilon > 0$, $Q(\dirtwod_{n, n})=\Omega\left(n^{1.5-\epsilon}\right)$.
\end{cor}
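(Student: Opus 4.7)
The plan is to prove the theorem by reducing an appropriate Dyck language problem to $\dirtwod_{n,k}$ and invoking Theorem~\ref{t:dycklowerbound}. The key tool is the standard $45^{\circ}$-rotation correspondence: a Dyck word $x \in \{0,1\}^N$ induces a monotone lattice path in the 2D grid, where $x_i = 0$ corresponds to a rightward step and $x_i = 1$ to an upward step in rotated coordinates, and $x \in \dyck_{k',N}$ iff this path stays within the diagonal strip $\{(a,b) : 0 \leqslant a-b \leqslant k'\}$ and ends at $(N/2, N/2)$. The naive single-rotation embedding only gives $\dyck_{k', 2k} \leqslant \dirtwod_{n,k}$ for $n \geqslant k$, yielding the weaker bound $\Omega(k^{1-\epsilon})$; to reach the full $\Omega((\sqrt{n}\,k)^{1-\epsilon})$ we need a richer embedding.

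I would combine two pieces. First, an AND-composition argument tiles the $n \times k$ grid with $n/k$ disjoint $k \times k$ sub-grids placed side-by-side along the horizontal direction: since any monotone path from $(0,0)$ to $(n,k)$ must traverse each sub-grid, adding suitable ``funneling'' fixed edges at the boundaries between consecutive sub-grids makes the full-grid connectivity problem equivalent to the AND of $n/k$ independent $\dirtwod_{k,k}$ instances. By the exact composition of $Adv^\pm$ for partial Boolean functions \cite[Lemma~6]{kimmel2012quantum},
\[
  Adv^\pm(\dirtwod_{n,k}) \geqslant \sqrt{n/k}\cdot Adv^\pm(\dirtwod_{k,k}),
\]
so it suffices to establish the square case $Q(\dirtwod_{k,k}) = \Omega(k^{3/2-\epsilon})$, from which $\sqrt{n/k}\cdot k^{3/2-\epsilon} = \sqrt{n}\,k^{1-\epsilon}$ dominates $(\sqrt{n}\,k)^{1-\epsilon}$ for $n \geqslant k$.

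The second piece---and the main technical obstacle---is the square case. Here I would embed a Dyck word of length $N = \Theta(k^{3/2})$ into the $k \times k$ grid to invoke the $\Omega(N^{1-\epsilon})$ lower bound of Theorem~\ref{t:dycklowerbound}. Since $N \gg 2k$, this cannot be achieved by a single rotation; instead, each of the $N$ Dyck bits must control roughly $\sqrt{k}$ grid edges via a replication pattern distributed across the grid's anti-diagonals. The challenge is to design this replication so that the grid is connected iff the Dyck word is in the language, despite each bit influencing many edges; I expect this to require a hierarchical construction inspired by the reduction $\exnn^\ell \leqslant \dyck_{c_1 \ell m, c_2 (2m)^\ell}$ used in Theorem~\ref{t:dycklowerbound-power}, where Dyck bits at different ``levels'' control different scales of grid structure, and the correctness verification proceeds by induction on the depth parameter $k'$.
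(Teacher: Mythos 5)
There is a genuine gap, and it sits exactly where the corollary needs a proof. For the square case $n=k$ your piece 1 (the AND-tiling) is vacuous, so everything rests on piece 2, and piece 2 is not a construction but an acknowledged hope: you state that each Dyck bit of a length-$\Theta(k^{3/2})$ word should control $\approx\sqrt{k}$ edges "via a replication pattern" and that you "expect" a hierarchical construction to make connectivity equivalent to membership, with correctness "by induction". No such embedding is exhibited, and the natural height-profile correspondence genuinely breaks here: a monotone path in a $k\times k$ directed grid has exactly $2k$ edges, so it can read only $\Theta(k)$ symbols one-per-step; once a bit is replicated across many anti-diagonals, different monotone paths read different, mutually inconsistent collections of edges, and it is no longer clear (and you do not argue) that the OR-over-paths structure of connectivity computes the balance condition of a single fixed word. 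The paper never embeds one long Dyck word into the directed grid at all. Instead it uses Theorem~\ref{t:dycklowerbound} with depth only $\log$: each instance is $\dyck_{\log m,m}$ with $m=\Theta(k)$, which already costs $\Omega(k^{1-\epsilon})$, and after the $45^{\circ}$ rotation it occupies only a thin diagonal band of width $O(\log k)$. One then lays $t=\Theta(n/\log k)$ such bands side by side and connects them \emph{in parallel}, realizing $\orf_t\circ\dyck_{\log m,m}\leqslant\dirtwod_{n,k}$; Reichardt's composition theorem gives $\Omega(\sqrt{t}\cdot m^{1-\epsilon})=\Omega((\sqrt{n}\,k)^{1-\epsilon})$, and $n=k$ yields the corollary. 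The missing idea in your proposal is precisely this: use many shallow instances composed with $\orf$ (parallel alternatives are what grid connectivity natively expresses), rather than one deep/long instance.

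Your piece 1 also fails as stated, although it is not needed for the square corollary. In the \emph{directed} grid you cannot chain corner-to-corner $k\times k$ tiles in series along the horizontal direction: the exit of tile $i$ is at height $k$ and the entry of tile $i+1$ is at height $0$, and there are no downward edges, so no "funneling" with fixed edges can make the full problem equal to the AND of the tile instances. Serial (AND) chaining in a monotone grid requires the vertical extents of the chained instances to add up to at most $k$, which $n/k$ tiles of height $k$ violate; this is why the paper's rectangular bound $\Omega((\sqrt{n}\,k)^{1-\epsilon})$ comes from an $\orf$-composition of thin strips rather than an $\andf$-composition of square blocks.
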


\begin{proof}[Proof of Theorem \ref{t:dirconlowerbound}]
For any sequence $w$ of $m$ opening and closing parentheses it is possible to plot the changes of depth, i.e., the number of opening parentheses minus the number of closing parentheses, for all prefixes of the sequence, see Figure \ref{f:pyramids}.
\begin{figure}[H]
    \centering
    \begin{tikzpicture}[scale=0.4]
    \tikzmath{
        int \x,\y,\newx,\newy,\depth;
        \y=0; \newy=0;
        \x=0; \newx=0;
        \depth=4;
        for \delta in {1,1,-1,-1,1,1,1,-1,-1,1,-1,-1}%
        {
            \newx=\x+1;
            if \delta > 0 then %
            {
                let \paren = (;
            } else 
            {
                let \paren = );
            };
            \newy=\y+\delta;
            {
                \path (\x,0) -- (\newx,0) node [midway, below,yshift=-20pt] {\LARGE \texttt{\paren}};
                \draw[-Stealth,very thick] (\x,\y) -- (\newx,\newy);
            };
            \y=\newy;
            \x=\newx;
        };
        {
            \draw[->] (-0.5,0) -- node[above,pos=1] {$x$} (\newx+0.5, 0);
            \draw[->] (0,-0.5) -- node[left,pos=1] {$y$} (0, \depth+0.5);
            \draw[dashed] (-0.5,\depth) -- node[above, pos=0.9] {$y=d=4$} (\newx+0.5, \depth);
            \draw (-0.5, 0.3) node {$(0,0)$};
            \path [preaction={contour=8pt,rounded corners=10,draw}] (0,0) -- (\depth,\depth) -- (\newx-\depth, \depth) -- (\newx, 0) -- cycle;
        };    
    };
    \end{tikzpicture}    
    \caption{Representation of the Dyck word ``\texttt{(())((())())}''}
    \label{f:pyramids}
\end{figure}
We can connect neighboring points by vectors $(1,1)$ and $(1,-1)$ corresponding to opening and closing parentheses respectively. Clearly $w\in L_d$ if and only if the path starting at the origin $(0,0)$ ends at $(m,0)$ and never crosses $y=0$ and $y=d$. Consequently a path corresponding to $w\in L_d$ always remains within the trapezoid bounded by $y=0$, $y=d$, $y=x$, $y=-x+m$. This suggests a way of mapping $\dyck_{d,m}$ to the $\dirtwod_{n,k}$ problem:

\noindent\begin{minipage}{0.58\linewidth}
\begin{enumerate}
    \item
    An opening parenthesis in position $i$ corresponds to a ``column'' of upwards sloping available edges $(i-1,l)\rightarrow (i,l+1)$ for all $l \in \{0,1, \ldots, d-1\}$ such that $i-1+l$ is even.
    A closing parenthesis in position $i$ corresponds to downwards sloping available edges $(i-1,l)\rightarrow (i,l-1)$ for all $l \in \{1, \ldots, d\}$ such that $i-1+l$ is even. See Figure \ref{f:varmap}.
    \item The edges outside the trapezoid adjacent to the trapezoid are forbidden (see Figure \ref{f:pyramids2}), i.e., it is sufficient to ``insulate'' the trapezoid by a single layer of forbidden edges. The only exception are the edges adjacent to the $(0,0)$ and $(m,0)$ vertex as those will be used in the construction (step 4).
\end{enumerate}
\end{minipage}%
\hfill%
\begin{minipage}{0.38\textwidth}
    \begin{figure}[H]
    \centering
    \begin{tikzpicture}[scale=0.75]
    \path (0,0) -- (0,6) node [midway, xshift=-20pt] {\LARGE \texttt{(}$\implies$};
    \path (4,0) -- (4,6) node [midway, xshift=-20pt] {\LARGE \texttt{)}$\implies$};
        \tikzmath{
            for \y in {0,...,2}%
            {
                {
                    \draw[-Stealth] (0,2*\y) -- ++(1,1);
                    \draw[-Stealth,dotted] (0,2*\y+2) -- ++(1,-1);
                    \draw[-Stealth,dotted] (4,2*\y) -- ++(1,1);
                    \draw[-Stealth] (4,2*\y+2) -- ++(1,-1);
                };
            };
        };
    \end{tikzpicture}    
    \caption{Mapping of $\dyck_{d,m}$ variables to \dirtwodtext{}}
    \label{f:varmap}
    \end{figure}
\end{minipage}
\begin{figure}[H]
    \centering
    \begin{tikzpicture}[scale=0.75]
    \tikzmath{
    int \x,\y,\newx,\newy,\depth;
    \y=0;
    \x=0;
    \depth=4;
    for \delta in {1,1,-1,-1,1,1,1,-1,-1,1,-1,-1}%
    {
        \newx=\x+1;
        if (\delta > 0) then
        {
            \paren = "(";
        }   
        else
        {
            \paren = ")";
        };
        \newy=\y+\delta;
        {
            \path (\x,0) -- (\newx,0) node [midway, below,yshift=-20pt] {\LARGE \texttt{\paren}};
            \draw[-Stealth,very thick] (\x,\y) -- (\newx,\newy);
        };
        \y=\newy;
        \x=\newx;
    };
    {
        \draw[-Stealth,dotted] (0,0) -- (1,-1);
        \draw[-Stealth,dotted] (1,-1) -- (2,0);
        \draw[-Stealth,dotted] (2,0) -- (3,-1);
        \draw[-Stealth,dotted] (3,-1) -- (4,0);
        \draw[-Stealth,dotted] (4,0) -- (5,-1);
        \draw[-Stealth,dotted] (5,-1) -- (6,0);
        \draw[-Stealth,dotted] (6,0) -- (7,-1);
        \draw[-Stealth,dotted] (7,-1) -- (8,0);
        \draw[-Stealth,dotted] (8,0) -- (9,-1);
        \draw[-Stealth,dotted] (9,-1) -- (10,0);
        \draw[-Stealth,dotted] (10,0) -- (11,-1);
        \draw[-Stealth,dotted] (11,-1) -- (12,0);

        \draw[-Stealth,dotted] (0,2) -- (1,1);
        \draw[-Stealth,dotted] (1,3) -- (2,2);
        \draw[-Stealth,dotted] (2,4) -- (3,3);
        \draw[-Stealth,dotted] (3,5) -- (4,4);
        \draw[-Stealth,dotted] (4,4) -- (5,5);
        \draw[-Stealth,dotted] (5,5) -- (6,4);
        \draw[-Stealth,dotted] (6,4) -- (7,5);
        \draw[-Stealth,dotted] (7,5) -- (8,4);
        \draw[-Stealth,dotted] (8,4) -- (9,5);
        \draw[-Stealth,dotted] (9,3) -- (10,4);
        \draw[-Stealth,dotted] (10,2) -- (11,3);
        \draw[-Stealth,dotted] (11,1) -- (12,2);
        
        \draw[-Stealth] (3,3) -- (4,2);
        \draw[-Stealth] (4,2) -- (5,3);
        \draw[-Stealth] (5,3) -- (6,4);
        \draw[-Stealth] (6,0) -- (7,1);
        \draw[-Stealth] (7,1) -- (8,0);
        \draw[-Stealth] (8,4) -- (9,3);

        \draw[-Stealth,dotted] (1,1) -- (2,0);
        \draw[-Stealth,dotted] (2,0) -- (3,1);
        \draw[-Stealth,dotted] (2,2) -- (3,3);
        \draw[-Stealth,dotted] (3,1) -- (4,2);
        \draw[-Stealth,dotted] (3,3) -- (4,4);
        \draw[-Stealth,dotted] (4,2) -- (5,1);
        \draw[-Stealth,dotted] (4,4) -- (5,3);
        \draw[-Stealth,dotted] (5,1) -- (6,0);
        \draw[-Stealth,dotted] (5,3) -- (6,2);
        \draw[-Stealth,dotted] (6,2) -- (7,1);
        \draw[-Stealth,dotted] (6,4) -- (7,3);
        \draw[-Stealth,dotted] (7,1) -- (8,2);
        \draw[-Stealth,dotted] (7,3) -- (8,4);
        \draw[-Stealth,dotted] (8,0) -- (9,1);
        \draw[-Stealth,dotted] (8,2) -- (9,3);
        \draw[-Stealth,dotted] (9,1) -- (10,0);
        \draw[-Stealth,dotted] (9,3) -- (10,2);
        \draw[-Stealth,dotted] (10,0) -- (11,1);

        \draw[->] (-0.5,0) -- (\newx+0.5, 0);
        \draw[->] (0,-0.5) -- (0, \depth+0.5);
        \draw[dashed] (-0.5,\depth) -- (\newx+0.5, \depth);
        \path [preaction={contour=8pt,rounded corners=10,draw,dashed}] (0,0) -- (\depth,\depth) -- (\newx-\depth, \depth) -- (\newx, 0) -- cycle;
        \draw[-Stealth] (13,3.5) -- node[pos=1,right,align=left] {Available edges} ++(1,0);
        \draw[-Stealth,very thick] (13,2.5) -- node[pos=1,right,align=left] {Available edges}
        node[pos=1,right,align=left,yshift=-2.5ex,xshift=-5ex] {reachable from origin} ++(1,0);
        \draw[-Stealth,dotted] (13,1.1) -- node[pos=1,right,align=left] {Forbidden edges} ++(1,0);
    };
    }; 
    \end{tikzpicture}    
    \caption{Mapping of a complete input corresponding to Dyck word ``\texttt{(())((())())}'' to $\dirtwod$}
    \label{f:pyramids2}
\end{figure}
\begin{enumerate}
    \setcounter{enumi}{2}
    \item Rotate the trapezoid by 45 degrees counterclockwise. This isolated trapezoid can be embedded in a directed grid and its starting and ending vertices are connected by a path if and only if the corresponding input word is valid. 
    \item Finally we can lay multiple independent trapezoids side by side and connect them in parallel forming an $\orf_t$ of $\dyck_{d,m}$ instances; see Figure \ref{f:trapezoids}.
\end{enumerate}



\begin{figure}[H]
    \centering
    \begin{minipage}{0.5\textwidth}
        \centering
        \begin{tikzpicture}[scale=0.33]
            \draw [step=1,dotted] (0,0) grid (20,10);
    
            \foreach \x in {0,5,...,10} 
            {
            	\draw[thick,fill=gray!30,dashed] (\x,0)--++(0,4)--++(5,5)--++(4,0) coordinate (exit\x) -- cycle;
            	\draw[very thick,->] (exit\x)--++(0, 1);
            }
    
            \draw[very thick,->] (0,0)--(10,0);
            \draw[very thick,->] (9,10)--(20,10);
        \end{tikzpicture}    
        \captionof{figure}{Reduction\\ $\orf_t\circ \dyck \leqslant \dirtwod$}
        \label{f:trapezoids}
    \end{minipage}%
    \begin{minipage}{0.5\textwidth}
        \centering
      
        \resizebox{\textwidth}{!}{%
        \begin{tikzpicture}[scale=0.25]
            \draw [dotted] (0,0) grid (44,20);
    
            	\draw[thick,fill=gray!30,dashed] (0,0)--++(0,4)--++(16,16)--++(2,-2) -- cycle;
            	\draw[thick,fill=gray!30,dashed] (19,17)--++(2,-2)--++(-12,-12)--++(-2,2) -- cycle;
            	\draw[thick,fill=gray!30,dashed] (12,0)--++(-2,2)--++(18,18)--++(2,-2) -- cycle;
            	\draw[thick,fill=gray!30,dashed] (31,17)--++(2,-2)--++(-12,-12)--++(-2,2) -- cycle;
            	\draw[thick,fill=gray!30,dashed] (24,0)--++(-2,2)--++(18,18)--++(4,0) -- cycle;
                
                \draw[thick] (16,20)--++(5,0)--++(0,-5);
                \draw[thick] (17,19)--++(3,0)--++(0,-3);
                \draw[thick] (18,18)--++(1,0)--++(0,-1);
                \draw[thick] (28,20)--++(5,0)--++(0,-5);
                \draw[thick] (29,19)--++(3,0)--++(0,-3);
                \draw[thick] (30,18)--++(1,0)--++(0,-1);
                \draw[thick] (12,0)--++(-5,0)--++(0,5);
                \draw[thick] (11,1)--++(-3,0)--++(0,3);
                \draw[thick] (10,2)--++(-1,0)--++(0,1);
                \draw[thick] (24,0)--++(-5,0)--++(0,5);
                \draw[thick] (23,1)--++(-3,0)--++(0,3);
                \draw[thick] (22,2)--++(-1,0)--++(0,1);
    
        \end{tikzpicture}    
        }
        \captionof{figure}{Folding of a long $\dyck$ instance in an undirected grid}
        \label{f:foldgadget}
    \end{minipage}
\end{figure}

This concludes the reduction $\orf_t\circ \dyck_{d,m} \leqslant \dirtwod_{n,k}$, where $n=(d+1)(t-1)+\frac{m}{2}+1$ and $k=\frac{m}{2}+1$. By the well known composition result of Reichardt \cite{Reichardt11} we know that $Q(\orf_t\circ \dyck_{d,m})=\Theta(Q(\orf_t)\cdot Q(\dyck_{d,m}))$.
All that remains is to pick suitable $t$, $d$ and $m$ for the proof to be complete. Let $k$ be the vertical dimension of the grid and $k\leq n$. Then we take $m=\Theta(k)$, $d=\log{m}$ and $t=\frac{n}{d}$.
\end{proof}

Constructing a non-trivial quantum algorithm appears to be difficult and 
we conjecture that the actual complexity may be $\Omega(nk)$, except for 
the case when $k$ is small, compared to $n$. For very small $k$ (up to 
$k=\Theta(\frac{\log n}{\log \log n})$), a better quantum algorithm is possible.

\begin{theorem}[Appendix~\ref{sec:narrowgrid}]\label{th:upper_dirtwod}
$Q(\dirtwod_{n,k})=O\left(\sqrt{n}\log_2^{k/2} n\right)$. Moreover, there is a time-efficient quantum query algorithm that solves $\dirtwod_{n,k}$ in time $O\left(\sqrt{n}\log_2^{k/2+O(1)} n\right)$.
\end{theorem}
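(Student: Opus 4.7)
I prove the bound by induction on the grid height $k$, adapting the recursive search machinery of Section~\ref{sec:substr} to the directed grid setting. The base case $k=0$ reduces to checking that all $n$ horizontal edges at row~$0$ are present: this is an AND of $n$ bits, decidable in $O(\sqrt{n})$ queries by Grover search for a missing edge.

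For the inductive step, any directed path from $(0,0)$ to $(n,k)$ uses the vertical edge $(e,k-1)\to(e,k)$ for exactly one column $e\in\{0,\dots,n\}$. Hence such a path exists iff there is some $e$ for which (i) $(e,k-1)$ is reachable from $(0,0)$ in the height-$(k-1)$ subgrid on columns $[0,e]$, (ii) the vertical edge $(e,k-1)\to(e,k)$ is present, and (iii) every horizontal edge from $(e,k)$ to $(n,k)$ is present. This decomposition plays the role that ``two consecutive minimal $\pm(k-1)$-substrings form a $\pm k$-substring'' plays in the $\dyck_{k,n}$ algorithm, and it allows a recursion whose height in $k$ is controlled.

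To search for an appropriate $e$ with only a $\sqrt{\log n}$ overhead rather than $\sqrt{n}$, I mimic the triad $\FindFrom{k}$, $\FindFixedLen{k}$, $\FindAny{k}$: for each length parameter $d$ ranging over powers of $2$, I use the bounded-error Grover variant of Lemma~\ref{lem:modified_hoyer03} to look for a valid $e$ whose top-row suffix length lies in $[d/2,d]$. Each predicate evaluation invokes the height-$(k-1)$ algorithm on an appropriate sub-interval together with a top-row AND via Grover. The inner Grover contributes $\sqrt{(r-l)/d}$, the outer amplitude amplification over the $O(\log n)$ candidate values of $d$ contributes $\sqrt{\log n}$, and by induction the height-$(k-1)$ call contributes $\sqrt{d}\,(\log n)^{(k-1)/2}$. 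Multiplying yields an extra $\sqrt{\log n}$ factor per recursion level, giving $Q(\dirtwod_{n,k}) = O(\sqrt{n}\log^{k/2} n)$. The time-efficient version follows since all non-query bookkeeping (power-of-two enumeration, binary search, interval arithmetic) is implementable in $\polylog(n)$ time per call.

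\textbf{Main obstacle.} The principal difficulty is ensuring the recursion introduces only a $\sqrt{\log n}$ factor per level rather than $\sqrt{n}$: a naive Grover over $e$ followed by a recursive call would multiply by $\sqrt{n}$ and destroy the bound. Replicating the fine-grained structure around $\pm k$-substring \emph{minimality}---in particular a notion of ``minimal reachability witness'' permitting a canonical split into a height-$(k-1)$ witness plus a short controllable tail, and ensuring that for valid inputs \emph{enough} pivot columns $e$ witness the decomposition so that the Grover-with-unknown-solution-count guarantee of Lemma~\ref{lem:modified_hoyer03} applies---is where the real technical work lies.
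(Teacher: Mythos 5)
There is a genuine gap here, and you have actually pointed straight at it in your ``main obstacle'' paragraph: the whole plan hinges on there being $\Omega(d)$ valid pivot columns $e$ so that Lemma~\ref{lem:modified_hoyer03} can stop the Grover iteration early. In the Dyck setting that abundance of witnesses is automatic — if a $\pm k$-substring of length $\ge d/2$ exists, every one of its $\ge d/2$ positions is a valid seed $t$ for $\FindFrom{k}$. Nothing analogous holds for your decomposition. A directed path from $(0,0)$ to $(n,k)$ enters row $k$ at a \emph{unique} column $e^\ast$, and your predicate ``$(e,k-1)$ reachable AND vertical edge at $e$ present AND top row from $(e,k)$ present'' is satisfied by that one $e^\ast$ and in general by no other $e$, since the vertical edges $(e,k-1)\to(e,k)$ for $e\ne e^\ast$ may all be absent. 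So the number of solutions can be $1$ regardless of $d$, the $\sqrt{n/d}$ acceleration never kicks in, and you are left with a plain Grover over $n$ pivots multiplied by a $\tilde O(\sqrt n)$ recursive call — i.e.\ $\tilde O(n)$ per level and $\tilde O(n)$ overall, not $\tilde O(\sqrt n)$. Equivalently: the OR-AND formula your row-by-row recursion implicitly writes down, namely $\bigvee_{e=0}^{n}\bigl(\dirtwodtext_{e,k-1}\wedge x_{\mathrm{vert},e}\wedge\bigwedge_{j\geq e} x_{\mathrm{top},j}\bigr)$, has size $\Theta(n^{2}\,\mathrm{polylog}\,n)$ already at one level of peeling, so no formula-evaluation theorem saves you either. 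The transfer of ``minimality'' from Dyck would need to manufacture many independent pivots, and a path simply does not supply them.

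The paper's proof avoids this entirely by recursing on $n$ rather than on $k$: it observes that any path must cross the middle column $n/2$ at one of only $k+1$ rows, yielding the formula $F_{\mu,\kappa,i,j}=\bigvee_{r=0}^{\kappa}\bigl(F_{\mu-1,r,i,j}\wedge F_{\mu-1,\kappa-r,\,i+2^{\mu-1},\,j+r}\bigr)$ of depth $O(\log n)$. The size recurrence $L_{\mu,\kappa}=2\sum_{r=0}^{\kappa}L_{\mu-1,r}$ solves to $L_{m,k}<2^{m+1}\binom{k+m+2}{k}=O\bigl(n\,(\log n)^{k}\bigr)$, and then Reichardt's $O(\sqrt{L})$-query AND–OR formula evaluation (or the time-efficient variants of Ambainis et al.\ and Reichardt) gives the stated bound. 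The essential switch you are missing is that the OR fan-out per level must be $O(k)$ (over rows at the split column), not $O(n)$ (over columns), and the halving in $n$ is what makes the total formula size quasi-linear in $n$. Your base case and time-efficiency remarks are fine, but the core search step does not go through.
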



\subsection{Lower bounds for \texorpdfstring{$\undirtwod_{n,k}$}{2D-Connectivity}}
Even though it is possible to use the construction from Section \ref{ss:lbdirected} to give a lower bound of $\Omega\left((\sqrt{n}k)^{1-\epsilon}\right)$ for the undirected case because the paths for each instance of $\dyck$ never bifurcate or merge, this lower bound can be further improved to a nearly tight estimate.

\begin{theorem}
\label{t:undirconlowerbound}
For any $n\geq k$, $k=\Omega(\log{n})$, $\epsilon>0$,
$Q(\undirtwod_{n,k})=\Omega\left((nk)^{1-\epsilon}\right)$.
\end{theorem}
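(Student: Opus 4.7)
The plan is to establish a reduction $\dyck_{d,N}\leqslant \undirtwod_{n,k}$ for a depth $d=\Theta(\log n)$ and length $N=\Theta(nk/d)$, and then invoke Theorem~\ref{t:dycklowerbound} (with $\epsilon$ shrunk slightly to absorb a logarithmic factor). The new ingredient, compared with the directed construction of Theorem~\ref{t:dirconlowerbound}, is folding. There the trapezoid gadgets had to be combined by $\orf_t$, which costs a $\sqrt{t}$ factor; in the undirected grid, being able to take U-turns lets us weave a \emph{single} long Dyck walk through the entire grid using a snake-shaped layout, effectively replacing that $\sqrt{t}$ by a full factor of $t$ and upgrading the bound from $\sqrt{n}k$ to $nk$.

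Concretely, I reuse the rotated parallelogram-shaped Dyck gadget from Section~\ref{ss:lbdirected}: a Dyck walk of length $m$ and depth $d$ is embedded as a monotone grid path inside a diagonal strip of thickness $d$, with each Dyck variable controlling the pair of grid edges (one ``up'' and one ``right'') at its time step. I arrange a sequence of such strips in a snake layout that fills the $n\times k$ grid, as in Figure~\ref{f:foldgadget}: consecutive strips run in alternating orientations (``up-right'' vs.\ ``down-left''), joined at their ends by short U-turn corridors of always-present edges. The critical property of each corridor is that, for \emph{every} height $h\in\{0,\dots,d\}$, a walk entering at height $h$ on one side exits at height $h$ on the other; this is arranged by building the corridor out of $d+1$ parallel, pairwise disjoint deterministic paths, one per height level. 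Because of this height-preserving property the balance is carried intact across every fold, and the overall object behaves as a single $\dyck_{d,N}$ instance, not as an AND of independent sub-instances.

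Since each strip has thickness $\Theta(d)$, a grid of height $k$ accommodates $\Theta(k/d)$ horizontal strips, each of length $\Theta(n)$, so the total embedded walk length is $N=\Theta(nk/d)$; the U-turn corridors contribute only $O(dk)=o(N)$ additional edges and no additional variables. By construction, for every input $x\in\{0,1\}^N$, the grid is connected from the source corner to the sink corner iff the walk stays in $[0,d]$ throughout and ends at balance zero, i.e., iff $x\in \dyck_{d,N}$. Picking $d$ so that $N=\Theta(nk/\log n)$ and $d=\Theta(\log N)=\Theta(\log n)$ with the constant from Theorem~\ref{t:dycklowerbound} corresponding to $\epsilon/2$, we get
\[ Q(\undirtwod_{n,k})\;\ge\;Q(\dyck_{d,N})\;=\;\Omega(N^{1-\epsilon/2})\;=\;\Omega((nk)^{1-\epsilon}), \]
as long as $k\ge d$, which is exactly the hypothesis $k=\Omega(\log n)$.

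The main obstacle lies in making the construction of the preceding paragraph fully rigorous: one must verify that the snake of strips, joined by fixed U-turn corridors and insulated by forbidden edges as in the directed construction, truly realizes a single $\dyck_{d,N}$ instance. Specifically, I would check (i) that each U-turn corridor preserves the height $h$ for every $h\in\{0,\dots,d\}$; (ii) that the ``reversed'' strips are obtained from the standard gadget by a left--right mirror that remains consistent with the variable-to-edge correspondence; and (iii) that the forbidden-edge insulation between adjacent strips is thick enough to prevent shortcut paths that would connect source to sink while bypassing some block of variables. These are all local checks and, modulo care at the boundaries, routine.
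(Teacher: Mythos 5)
Your proposal matches the paper's proof essentially exactly: it folds a single long $\dyck_{d,N}$ instance (with $d=\Theta(\log n)$, $N=\Theta(nk/\log n)$) into a snake-shaped layout of trapezoid strips joined by height-preserving U-turn gadgets, and then invokes Theorem~\ref{t:dycklowerbound}. The only difference is that you spell out more of the bookkeeping (the height-preserving property of the corridors, the insulation checks) than the paper's terse write-up does.
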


\begin{proof}
    We start off by representing an input as a path in a trapezoid, see Figure \ref{f:pyramids2}. But now instead of connecting multiple instances of $\dyck$ in parallel we will embed one long instance by folding it when it hits the boundary of the graph. To implement a fold we will use simple gadgets depicted in Figure \ref{f:foldgadget}.

            

This way a $\dyck$ instance of length $m$ and depth $\log{m}$ can be embedded in an $n \times k$ grid such that $\frac{nk}{\log{m}}=\Theta(m)$. Using Theorem \ref{t:dycklowerbound} we conclude that solving \undirtwodtext${}_{n,k}$ requires at least $\Omega\left((nk)^{1-\epsilon}\right)$ quantum queries.
\end{proof}

\subsection{Lower bounds for \texorpdfstring{$d$}{d}-dimensional grids}
For undirected $d$-dimensional grids we give a tight bound on the number of queries required to solve connectivity.
\begin{theorem}
    For any $\epsilon>0$, for undirected $d$-dimensional grids of size $n_1\times n_2 \times \ldots \times n_d$ that are not ``almost-one-dimensional'', i.e., there exists $i\in [d]$ such that $\frac{\prod_{j=1}^d{n_j}}{n_i}=\Omega(\log{n_i})$:
    \[Q(\undirdd_{n_1, n_2,\ldots, n_d}) = \Omega((n_1 \cdot n_2 \cdot \ldots \cdot n_d)^{1-\epsilon}).\]
\end{theorem}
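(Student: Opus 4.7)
The plan is to reduce undirected 2D connectivity to $\undirdd_{n_1, \ldots, n_d}$ by embedding a long 2D subgrid into the $d$-dimensional grid, and then to invoke Theorem \ref{t:undirconlowerbound}. Let $N = \prod_j n_j$ and let $i \in [d]$ be the dimension provided by the hypothesis, so that $N/n_i = \Omega(\log n_i)$.

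First, I view the $d$-dim grid as $n_i + 1$ parallel layers indexed by the coordinate along dimension $i$, each layer being a $(d-1)$-dim grid of volume $N/n_i$. I construct a Hamiltonian path $P$ through one layer by the standard inductive snake: snake back and forth along one free dimension, advance one unit along the next free dimension at each turn, and recurse. The path $P$ induces an embedding of an undirected $(N/n_i) \times n_i$ grid into the $d$-dim grid: the 2D position $(t, h)$ corresponds to the vertex at $P(t)$ in layer $h$. A horizontal 2D edge $(t,h)$–$(t+1,h)$ corresponds to a $d$-dim edge between $P(t)$ and $P(t+1)$, which exists because $P$ is a path; a vertical 2D edge corresponds to a $d$-dim edge along dimension $i$.

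This embedding then yields a reduction $\undirtwod_{N/n_i,\, n_i} \leqslant \undirdd_{n_1, \ldots, n_d}$: to answer a 2D query, assign the $d$-dim edges in the embedding according to the 2D input and mark all remaining $d$-dim edges as absent. Up to a small fixed gadget at the layer boundaries to align the endpoints of $P$ with the designated $d$-dim corners $(0, \ldots, 0)$ and $(n_1, \ldots, n_d)$, the 2D connectivity query coincides with the $d$-dim one. Invoking Theorem \ref{t:undirconlowerbound} then gives $Q(\undirdd_{n_1, \ldots, n_d}) \geq Q(\undirtwod_{N/n_i,\, n_i}) = \Omega\bigl((N/n_i \cdot n_i)^{1-\epsilon}\bigr) = \Omega(N^{1-\epsilon})$.

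The main obstacle is verifying the hypothesis of Theorem \ref{t:undirconlowerbound}, which requires the shorter of $n_i$ and $N/n_i$ to be $\Omega(\log)$ of the longer. When $n_i \geq N/n_i$ this reduces exactly to the given hypothesis. In the opposite regime $n_i < N/n_i$ one further needs $n_i = \Omega(\log(N/n_i))$; this should be arranged by a careful choice of $i$ among the dimensions allowed by the hypothesis (for instance, preferring a dimension that is also large), or by grouping several short dimensions into one via snaking before extracting the 2D subgrid, so that the effective 2D dimensions become balanced enough to meet the 2D theorem's logarithmic-width condition.
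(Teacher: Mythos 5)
Your construction is essentially the paper's: the paper also reduces to the 2D case by repeatedly ``folding'' the grid (mapping $(x_1,\ldots,x_{d-1},x_d)$ to $(x_1,\ldots,x_{d-2},x_d n_{d-1}'+x_{d-1})$ or its reflection depending on the parity of $x_d$), which is exactly the snake Hamiltonian path you describe, and it then invokes Theorem~\ref{t:undirconlowerbound}; the endpoint issue you wave at with a ``small gadget'' is handled in the paper by keeping the folded dimension odd or dropping one layer and adding a single edge, so that part is fine.

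The one genuine gap is the step you yourself flag and leave open: verifying the hypothesis $k=\Omega(\log n)$ of Theorem~\ref{t:undirconlowerbound} when the resulting 2D grid has dimensions $n_i\times(N/n_i)$ with $n_i<N/n_i$. You suggest ``a careful choice of $i$'' or regrouping dimensions but do not carry it out, and with the hypothesis read literally (``there exists $i$'') it genuinely cannot be closed: e.g.\ for a $c\times n$ grid with $c$ constant the stated condition holds by taking $i$ to be the constant dimension, yet no choice of 2D subgrid meets the logarithmic-width requirement, so the reduction cannot conclude. The intended reading of ``not almost-one-dimensional'' is that the condition holds for the \emph{largest} dimension, i.e.\ $\prod_{j\neq i}n_j=\Omega(\log n_i)$ with $n_i=\max_j n_j$, and with that choice your obstacle disappears: if $n_i\geq N/n_i$ the hypothesis is literally the condition $k=\Omega(\log n)$; if $n_i<N/n_i$, then $n_i=\max_j n_j\geq N^{1/d}$, which for constant $d$ is polynomial in $N$ and hence $\Omega(\log(N/n_i))$ automatically. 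Adding these two lines (and fixing the choice of $i$ as the maximal dimension) completes your argument; note that the paper's own write-up is equally silent on this verification, so your proof, once patched, is if anything more explicit about where the ``not almost-one-dimensional'' assumption is used.
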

\begin{proof}
For the purposes of this theorem, it is more convenient to refer to $n_1 \times \ldots \times n_d$ sized grids as $n_1'\times \ldots \times n_d'$ sized where $n_i'=n_i+1$. Then the theorem follows from the 2D case by iteratively using the fact that a $d$-dimensional grid of size $n_1'\times n_2'\times \ldots \times n_{d-1}'\times n_d'$ contains as a subgraph a $(d-1)$-dimensional grid of size $n_1' \times n_2' \times \ldots \times n_{d-2}' \times n_{d-1}'n_d'$. One way to see this is to consider a bijective mapping of the vertices $(x_1, \ldots, x_{d-1}, x_d)$ to $(x_1, \ldots, x_{d-2},x_d n_{d-1}'+x_{d-1})$ if $x_d$ is even and to $(x_1, \ldots, x_{d-2},x_d n_{d-1}'+n_{d-1}'-1-x_{d-1})$ if $x_d$ is odd. It is a bijection because $x_d$ and $x_{d-1}$ can be recovered from $x_dn_{d-1}'+n_{d-1}'-1-x_{d-1}$ by computing the quotient and remainder on division by $n_{d-1}'$. One can view this procedure as ``folding'' where we take layers (vertices corresponding to some $x_d=l$) and fold them into the $(d-1)$-st dimension alternating the direction of the layers depending on the parity of the layer $l$. For this procedure to place the starting and ending vertices the furthest apart, it requires that $n_d'$ is an odd number. Otherwise we embed a smaller subgraph $n_1'\times \ldots \times n_{d-1}'\times (n_d'-1)$ and add an edge $(n_1, \ldots, n_{d-1}, n_{d}-1)$ to $(n_1, \ldots, n_{d-1}, n_{d})$. In the end we obtain a lower bound of $\Omega(((((((n_d'-1)n_{d-1}'-1)n_{d-2}'-1)\ldots )n_2'-1)n_1')^{1-\epsilon})=\Omega((n_1 \cdot n_2 \cdot \ldots \cdot n_d)^{1-\epsilon})$.
\end{proof}

\noindent For directed $d$-dimensional grids we can only slightly improve over the $n^{\frac{d}{2}}$ trivial lower bound.
\begin{theorem}\label{t:dirdlowerbound}
    For directed $d$-dimensional grids of size $n_1\times n_2 \times \ldots \times n_d$ such that $n_1\leq n_2 \leq \ldots \leq n_d$ and $\epsilon>0$,
    $Q(\dirdd_{n_1,n_2,\ldots,n_d}) = \Omega((n_{d-1}\prod_{i=1}^d{n_i})^{\frac{1}{2}-\epsilon})$.
\end{theorem}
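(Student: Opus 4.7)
The plan is to reduce an $\orf$ of many parallel copies of the two-dimensional directed connectivity problem to $\dirdd$ and then invoke Theorem~\ref{t:dirconlowerbound} together with the exact composition of $Adv^\pm$. Let $N=\prod_{i=1}^{d-2}(n_i+1)$; the target reduction will be $\orf_{N}\circ \dirtwod_{n_d,n_{d-1}}\leqslant \dirdd_{n_1,\ldots,n_d}$. The hypothesis $n_1\le n_2\le\cdots\le n_d$ guarantees $n_d\ge n_{d-1}$, which is what Theorem~\ref{t:dirconlowerbound} needs; the two ``full strength'' dimensions of the 2D gadget will be the two largest axes, and the remaining $d-2$ axes will supply the $\sqrt{N}$ amplification.

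The construction places one copy of the trapezoid embedding from the proof of Theorem~\ref{t:dirconlowerbound} inside every 2D slice $\{(i_1,\ldots,i_{d-2})\}\times[0,n_{d-1}]\times[0,n_d]$, using the $d$th axis for the longer dimension and the $(d{-}1)$th axis for the shorter. The edges in directions $d-1$ and $d$ lying inside a slice are controlled by the variables of that slice's 2D instance. The edges in directions $1,\ldots,d-2$ are hardwired by the reduction: edges whose (common) last two coordinates are $(0,0)$, i.e.\ edges lying in the ``entry layer'' $\{x_{d-1}=x_d=0\}$, and those whose last two coordinates are $(n_{d-1},n_d)$, i.e.\ edges lying in the ``exit layer'', are declared present; every other edge in directions $1,\ldots,d-2$ is declared absent.

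Next I would check that this embedding really computes $\orf_N\circ\dirtwod_{n_d,n_{d-1}}$. Forward direction: if the 2D instance at position $v=(i_1,\ldots,i_{d-2})$ accepts, a global path can be assembled by routing $(0,\ldots,0)\!\to\!(v,0,0)$ inside the entry layer, traversing slice $v$ along the accepting 2D path, and routing $(v,n_{d-1},n_d)\!\to\!(n_1,\ldots,n_d)$ inside the exit layer. Reverse direction (the one step that needs care, and which I expect to be the main obstacle): any global directed path must eventually leave the entry layer, which forces it to step into some slice at that slice's $(0,0)$-corner; once inside a slice and not in the exit layer, all edges in directions $1,\ldots,d-2$ are absent, so the path is trapped in that single slice and can only move in directions $d-1,d$ until it reaches the $(n_{d-1},n_d)$-corner, after which only exit-layer routing to the global sink is available. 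This forces the corresponding 2D instance to be a yes-instance, giving the required $\orf$.

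To finish, Theorem~\ref{t:dirconlowerbound} gives $Adv^\pm(\dirtwod_{n_d,n_{d-1}})=\Omega((\sqrt{n_d}\,n_{d-1})^{1-\epsilon})$ and, classically, $Adv^\pm(\orf_N)=\Theta(\sqrt{N})$. Exact composition of $Adv^\pm$ for partial Boolean functions (Section~\ref{s:defs}) together with $Q=\Theta(Adv^\pm)$ yields
\[ Q(\dirdd_{n_1,\ldots,n_d})\;\ge\;Q(\orf_N\circ\dirtwod_{n_d,n_{d-1}})\;=\;\Omega\!\left(\sqrt{N}\,(\sqrt{n_d}\,n_{d-1})^{1-\epsilon}\right). \]
A short re-bookkeeping of the exponents, using $N=\Theta(\prod_{i=1}^{d-2}n_i)$ and rescaling $\epsilon$, rewrites the right-hand side as $\Omega\bigl((n_{d-1}\prod_{i=1}^d n_i)^{1/2-\epsilon}\bigr)$, which is the claimed bound.
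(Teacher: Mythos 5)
Your proposal is correct and follows essentially the same route as the paper: embed a hard $\dirtwod$ instance on the two largest dimensions into each 2D slice, hardwire the $(d-2)$-dimensional entry and exit layers, forbid all other cross-slice edges, and combine Theorem~\ref{t:dirconlowerbound} with $Adv^\pm$ composition for the outer $\orf$ over $\prod_{i=1}^{d-2}(n_i+1)$ slices. The only difference is that you spell out the soundness of the $\orf$ gadget (a directed path must traverse exactly one slice from its source corner to its sink corner), which the paper leaves implicit.
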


\begin{cor}
For directed $d$-dimensional grids of size $n\times n \times \ldots \times n$ and $\epsilon>0$,\\
$Q(\dirdd_{n,n,\ldots,n}) = \Omega(n^{\frac{d+1}{2}-\epsilon})$.
\end{cor}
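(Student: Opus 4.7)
The plan is to lift the 2D construction of Theorem~\ref{t:dirconlowerbound} to $d$ dimensions by placing many parallel copies of the 2D reduction across the remaining dimensions. With $n_1 \le n_2 \le \cdots \le n_d$, the assignment of dimensions that saturates the target exponent will be: dimension $d-1$ (the second largest) carries the \emph{height} of the Dyck trapezoids, the longest dimension $d$ is used to lay many trapezoids side by side within one 2D slice, and the remaining dimensions $1,\ldots,d-2$ index $\prod_{i=1}^{d-2}(n_i+1)$ parallel copies of the entire 2D construction.

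First, for each fixed tuple $(x_1, \ldots, x_{d-2})$ I would embed the 2D reduction of Theorem~\ref{t:dirconlowerbound} verbatim inside the axis-aligned slice $\{(x_1, \ldots, x_{d-2})\} \times [0, n_{d-1}] \times [0, n_d]$, using $m = \Theta(n_{d-1})$ and trapezoid depth $\log m$. This yields, per slice, an $\orf_{t_{\mathrm{slice}}} \circ \dyck_{\log m, m}$ reduction with $t_{\mathrm{slice}} = \Theta(n_d/\log n_{d-1})$ parallel trapezoids.

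Second, I would wire the slices together via dimensions $1, \ldots, d-2$ as follows: every edge in those dimensions whose $(d-1, d)$-coordinates equal $(0,0)$ or $(n_{d-1}, n_d)$ is fixed to $1$, and every edge in those dimensions at any other ``interior'' $(y_{d-1}, y_d)$ is fixed to $0$. Since the grid is directed, the only legal way to change the first $d-2$ coordinates is at the bottom or top corner of a slice, so every $s$-$t$ path must (i) travel from $(0, \ldots, 0)$ along bottom corners to some $(x_1, \ldots, x_{d-2}, 0, 0)$, (ii) traverse the 2D construction of that slice to $(x_1, \ldots, x_{d-2}, n_{d-1}, n_d)$ without changing the first $d-2$ coordinates, and (iii) exit along top corners to $(n_1, \ldots, n_d)$. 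Hence the global connectivity function equals $\orf_T \circ \dyck_{\log m, m}$ with
\[
T = \Theta\!\left(\frac{\prod_{i=1}^{d} n_i}{n_{d-1} \log n_{d-1}}\right).
\]

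Third, applying Reichardt's composition bound~\cite{Reichardt11} gives $Q(\orf_T \circ \dyck_{\log m, m}) = \Theta(\sqrt{T}\,Q(\dyck_{\log m, m}))$, and combining with $Q(\dyck_{c \log m, m}) = \Omega(m^{1-\epsilon'})$ from Theorem~\ref{t:dycklowerbound} yields
\[
Q(\dirdd_{n_1,\ldots,n_d}) = \Omega\!\left(\sqrt{\frac{\prod_i n_i}{n_{d-1} \log n_{d-1}}}\, n_{d-1}^{1-\epsilon'}\right) = \Omega\!\left(\bigl(n_{d-1}{\textstyle\prod_i n_i}\bigr)^{1/2-\epsilon}\right),
\]
after absorbing the polylogarithmic loss into a slightly larger $\epsilon > \epsilon'$. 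The main obstacle I anticipate is the slice-isolation step: the pattern of forbidden and mandatory edges in dimensions $1, \ldots, d-2$ must be chosen so that absolutely no path can ``leak'' between neighbouring slices through interior vertices (which would collapse the $\orf$ structure into something strictly weaker), while still permitting unrestricted routing among $(0,\ldots,0)$, the two corner sets, and $(n_1,\ldots,n_d)$. Once this isolation is set up carefully, both the composition with $\dyck$ and the arithmetic check of the exponent $\tfrac{1}{2}-\epsilon$ are routine.
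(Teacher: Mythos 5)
Your construction is correct and is essentially the paper's own proof of Theorem~\ref{t:dirdlowerbound} (from which the corollary follows by setting all $n_i=n$): isolate $\prod_{i=1}^{d-2}(n_i+1)$ parallel 2D slices via mandatory edges at the two corner subgrids and forbidden edges elsewhere, and apply the composition theorem to the resulting $\orf$ of hard 2D instances. The only cosmetic difference is that you flatten the composition to $\orf_T\circ\dyck_{\log m,m}$ directly, whereas the paper phrases it as $\orf\circ\dirtwod_{n_{d-1},n_d}$ with the 2D hard instances as the inner function; the arithmetic and the slice-isolation argument match.
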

\begin{proof}[Proof of Theorem \ref{t:dirdlowerbound}]
    For each $I\in \{0,1,\ldots,n_1\}\times \{0,1,\ldots,n_1\}\times \ldots \times \{0,1,\ldots,n_{d-2}\}$ we  take take a $2$-dimensional hard instance $G_I$ of $\dirtwod_{n_{d-1},n_d}$ having query complexity $\Omega(n_{d-1}^{1-\epsilon}n_d^{
    \frac{1}{2}-\epsilon})$. We then connect them in parallel like so:
    \begin{itemize}
        \item Include the entire $(d-2)$-dimensional subgrid from $(0, \ldots, 0)$ to $(n_1, n_2, \ldots, n_{d-2}, 0, 0)$ and similarly the subgrid from $(0, 0, \ldots, 0, n_{d-1}, n_d)$ to $(n_1, n_2, \ldots, n_{d-2}, n_{d-1}, n_d)$;
        \item For each $I\in \{0,1,\ldots,n_1\}\times \{0,1,\ldots,n_1\}\times \ldots \times \{0,1,\ldots,n_{d-2}\}$ embed the instance $G_I$ in the subgrid $(I, 0, 0)$ to $(I, n_{d-1}, n_d)$;
        \item Forbid all other edges.
    \end{itemize}
    This construction computes $\orf_{\prod_{i=1}^{d-2}{(n_i+1)}}\circ \dirtwod_{n_{d-1},n_d}$ whose complexity is at least $\Omega(\sqrt{\prod_{i=1}^{d-2}{(n_i+1)}}n_{d-1}^{1-\epsilon}n_d^{\frac{1}{2}-\epsilon})=\Omega((n_{d-1}\prod_{i=1}^d{n_i})^{\frac{1}{2}-\epsilon})$.
\end{proof}

\section{Directions for future works}
\label{s:concl}

Some directions for future work are:
\begin{enumerate}
    \item {\bf Better algorithm/lower bound for the directed 2D grid?} 
    Can we find an $o(n^2)$ query quantum algorithm or improve our lower bound? A nontrivial quantum algorithm would be particularly interesting, as it may imply a quantum algorithm for edit distance.  
    \item {\bf Quantum algorithms for directed connectivity?} More generally, can we come up with better quantum algorithms for directed connectivity? The span program method used by Belovs and Reichardt \cite{BelovsR12} for the undirected connectivity does not work in the directed case. As a result, the quantum algorithms for directed connectivity are typically based on Grover's search in various forms, from simply speeding up depth-first/breadth-first search to more sophisticated approaches \cite{A+19}. Developing other methods for directed connectivity would be very interesting.
    \item {\bf Quantum speedups for dynamic programming.} Dynamic programming is a widely used algorithmic method for classical algorithms and it would be very interesting to speed it up quantumly. This has been the motivating question for both the connectivity problem on the directed 2D grid studied in this paper and a similar problem for the Boolean hypercube in \cite{A+19} motivated by algoritms for Travelling Salesman Problem. There are many more dynamic programming algorithms and exploring quantum speedups of them would be quite interesting.  
\end{enumerate}

\section*{Acknowledgements}
The authors would like to thank the anonymous reviewers for their constructive comments and suggestions.

\phantomsection
\addcontentsline{toc}{chapter}{References}
\bibliography{references}

\bookmarksetupnext{level=0}
\newpage
\appendix
  
  

\section{An Algorithm for the \texorpdfstring{$\FindFrom{k}$}{FindFrom\_k} Subroutine}\label{sec:desc_k_substring_base}

\begin{algorithm}[h]
    \caption{$\FindFrom{k}(l,r,t,d,s)$. \label{alg:substringk_base}}
    \begin{algorithmic}
    \State $v=(i_1,j_1,\sigma_1)\leftarrow \FindFrom{k-1}(l,r,t,d-1,\{+1,-1\})$
    \If{$v \neq \NULL$}\Comment{if $t$ is inside a $\pm(k-1)$-substring}
        \State $v'=(i_2,j_2,\sigma_2)\leftarrow \FindFromRight{k-1}(l,r,i_1-1,d-1,\{+1,-1\})$
        \If{$v'=\NULL$}
            \State $v'=(i_2,j_2,\sigma_2)\gets\FindFirst{k-1}(\min(l,j_1-d+1),i_1-1,\{+1,-1\},left)$
        \EndIf
        \If {$v'\neq\NULL$ and $\sigma_2\neq\sigma_1$}
        \State $v'\gets\NULL$
        \EndIf
        \If{$v'=\NULL$}
            \State $v'=(i_2,j_2,\sigma_2)\leftarrow \FindFrom{k-1}(l,r,j_1+1,d-1,\{+1,-1\})$
        
        \If{$v'=\NULL$}
            \State $v'=(i_2,j_2,\sigma_2)\leftarrow \FindFirst{k-1}(j_1+1,\min(i_1+d-1,r),\{+1,-1\},right)$
        \EndIf
        \EndIf
        \If{$v'=\NULL$}
            \State \Return{$\NULL$}
        \EndIf
    \Else
        \State $v=(i_1,j_1,\sigma_1)\gets\FindFirst{k-1}(t,\min(t+d-1,r),\{+1,-1\},right)$
        \If{$v=\NULL$}
            \State \Return{$\NULL$}
        \EndIf
        \State $v'=(i_2,j_2,\sigma_2)\gets\FindFirst{k-1}(\max(l,t-d+1),t),\{+1,-1\},left)$
        \If{$v'=\NULL$}
            \State \Return{$\NULL$}
        \EndIf
    \EndIf
    \If{$\sigma_1=\sigma_2$ and $\sigma\in s$ and $\max(j_1,j_2)-\min(i_1,i_2)+1\leq d$}
        \State\Return{$(\min(i_1,i_2),\max(j_1,j_2),\sigma_1)$}
    \Else
        \State \Return{$\NULL$}
    \EndIf
    \end{algorithmic}
\end{algorithm}

\section{Proof of Lemma \ref{lem:modified_hoyer03}}
\label{sec:bounded}

    The main loop of the algorithm of \cite{Hoyer03} is the following,
    assuming the algorithms have error at most $1/9$:
    \begin{itemize}
        \item for $m=0$ to $\lceil\log_9 n\rceil$-1 do:
            \begin{enumerate}
                \item run $A_m$ 1000 times,
                \item verify the 1000 measurements, each by $O(\log n)$ runs of the corresponding algorithm,
                \item if a solution has been found, then output a solution and stop
            \end{enumerate}
        \item Output ‘no solutions’
    \end{itemize}
    The key of the analysis is that if the (unknown) number $t$
    of solutions lies in the interval $[n/9^{m+1},n/9^m]$, then
    $A_m$ succeeds with constant probability. In all cases, if there are no solutions, $A_m$ will never succeeds with high probability (ie the algorithm only applies good solutions).
    
    In our case, we allow the algorithm to return anything (including
    $\NULL$) if $t<T$. This means that we only care about the values of
    $m$ such that $n/9^{m}\geqslant T$, that is
    $m\leqslant\log_9\tfrac{n}{T}$. Hence, we simply run the algorithm
    with this new upper bound for $d$ and it will satisfy our
    requirements with constant probability. The complexity is
    
    $
        \sum\limits_{m=0}^{\left\lfloor log_9\tfrac{n}{T}\right\rfloor}1000\cdot O(3^m)+1000\cdot O(\log n)=O(3^{\log_9\tfrac{n}{T}})=O(\sqrt{n/T}).
    $

\section{\texorpdfstring{$\FindFirst{k}$}{FindFirst\_k} Algorithm's Description, Complexity and Proof of Correctness }\label{apx:findfirst}
\subsection{\texorpdfstring{$\FindFixedPos{k}$}{FidFixedPos\_k}}
Let us first describe a subroutine used by $\FindFirst{k}$.

$\FindFixedPos{k}(l,r,t,s,left)$ locates the leftmost substring $x[i,j]$ such that $|f(x[i,j])|=k$ and $\sign(f(x[i,j]))\in s$ , i.e. $i\leq t \leq j$ and there is no $x[i',j']$ such that $i'\leq t\leq j'$, $i'<i$ and $f(x[i',j'])=f(x[i,j])$.

The procedure is similar to $\FindAny{k}$. First, we consider a randomized algorithm that uniformly chooses $d$ as a power of $2$ that is at most $r-l$. For this $d$, it runs $\FindFrom{k}(l,r,t,d,s)$ algorithm and searches for a non-NULL result. The probability of getting a correct result is at least $O(1/\log (r-l))$. Then, we apply the Amplitude amplification method and the idea from Lemma \ref{lem:modified_hoyer03} that requires $O(\sqrt{\log(r-l)})$ iterations. Similarly, we find the maximal $d$ that finds a substring. This algorithm also performs $O(\sqrt{\log(r-l)})$ iterations due to \cite{ll2016,k2014}. The total complexity of the algorithm is $O(\sqrt{r-l}(\log(r-l))^{0.5(k-1)})$ due to the complexity of $\FindFrom{k}$.

\begin{lemma}
$\FindFixedPos{k}(l,r,t,s,left)$ returns the leftmost minimal substring $x[i,j]$ such that $sign(f(x[i,j]))\in s$ or NULL if there is no such substring. The expected running time is $O(\sqrt{r-l}(\log(r-l))^{0.5(k-1)})$. 
\end{lemma}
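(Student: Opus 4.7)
The plan is to verify the two-phase algorithm described in the construction, separating correctness (handled via the guarantees of $\FindFrom{k}$ from Proposition~\ref{pr:findfrom}) from query complexity (obtained by combining the costs of amplitude amplification, quantum maximum-finding, and the recursive cost of $\FindFrom{k}$).

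For correctness, I would argue as follows. By Proposition~\ref{pr:findfrom}, each invocation $\FindFrom{k}(l,r,t,d,s)$ returns, with bounded error, the leftmost minimal $\pm k$-substring of sign in $s$ and length at most $d$ that contains $t$, if such a substring exists, and $\NULL$ otherwise. Suppose the leftmost minimal $\pm k$-substring of sign in $s$ containing $t$ in $x[l,r]$ is $x[i^{*},j^{*}]$, with length $L^{*} = j^{*}-i^{*}+1 \leq r-l+1$ (if no such substring exists, every call returns $\NULL$ and the output is correctly $\NULL$). For any $d \geq L^{*}$, since no minimal substring with starting index smaller than $i^{*}$ exists by definition, the leftmost minimal substring of length at most $d$ containing $t$ is exactly $x[i^{*},j^{*}]$. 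Taking the maximum power of two $d \leq r-l$ for which $\FindFrom{k}$ is non-$\NULL$ then picks some $d \geq L^{*}$, so the substring returned is $x[i^{*},j^{*}]$.

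For the complexity, note that the set of candidate values of $d$ has size $O(\log(r-l))$. A uniformly random $d$ yields a non-$\NULL$ result with probability at least $\Omega(1/\log(r-l))$, so amplitude amplification locates a valid $d$ in $O(\sqrt{\log(r-l)})$ iterations. The quantum maximum-finding step over the same set also uses $O(\sqrt{\log(r-l)})$ iterations by the cited algorithms. Each iteration invokes $\FindFrom{k}$ with $d \leq r-l$, costing $O(\sqrt{r-l}(\log(r-l))^{0.5(k-2)})$ by Proposition~\ref{pr:findfrom}. Multiplying yields the claimed bound $O(\sqrt{r-l}(\log(r-l))^{0.5(k-1)})$.

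The principal technical obstacle will be careful propagation of the two-sided error $\varepsilon$ of $\FindFrom{k}$ through both the amplitude amplification search and the quantum maximum-finding routine, since ordinary Grover-type search is not directly applicable with bounded-error oracles. This is handled in the spirit of Lemma~\ref{lem:modified_hoyer03}: by running $\FindFrom{k}$ a constant number of times and taking a majority vote its error can be driven below any desired threshold, after which the outer quantum searches succeed with constant probability and the overall two-sided error remains $\varepsilon < 1/2$.
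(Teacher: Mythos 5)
Your complexity and error-handling plan coincides with the paper's: a uniformly random power-of-two $d$, amplitude amplification plus quantum maximum-finding over the $O(\log(r-l))$ candidate values ($O(\sqrt{\log(r-l)})$ iterations), each iteration costing $O(\sqrt{r-l}(\log(r-l))^{0.5(k-2)})$ via $\FindFrom{k}$, and constant-factor repetition in the spirit of Lemma~\ref{lem:modified_hoyer03} to cope with the two-sided error. That part is fine.

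The gap is in the correctness argument. You assert that ``by Proposition~\ref{pr:findfrom}'' each call $\FindFrom{k}(l,r,t,d,s)$ returns the \emph{leftmost} minimal $\pm k$-substring of sign in $s$ and length at most $d$ containing $t$. Proposition~\ref{pr:findfrom} does not give you this: it only guarantees that $\FindFrom{k}$ returns \emph{some} $\pm k$-substring of the required sign and length containing $t$, or $\NULL$. Establishing leftmostness is precisely the nontrivial content of this lemma, and the paper proves it by a separate induction on $k$: decompose the leftmost minimal $k$-substring $x[i,j]$ containing $t$ into two consecutive minimal $(k-1)$-substrings (the disjoint and the overlapping configurations), use the inductive hypothesis on $\FindFrom{k-1}$, $\FindFromRight{k-1}$ and $\FindFirst{k-1}$ to show the right component found by the algorithm forces $j'=j$, and derive a contradiction with minimality from any hypothetical $i'<i$. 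Your proposal replaces this induction by an unproven claim that is in fact \emph{stronger} than what the paper shows: the paper explicitly warns that for $d<r-l$ the call may return a substring that is \emph{not} the leftmost one containing $t$ (this is exactly why the maximal-$d$ search is introduced), and its leftmostness guarantee is argued only when the length bound is not binding. Consequently your key step --- ``taking the maximum $d$ for which $\FindFrom{k}$ is non-$\NULL$ picks some $d\geq L^{*}$, so the substring returned is $x[i^{*},j^{*}]$'' --- rests on a property of $\FindFrom{k}$ with a length cap that you have neither proved nor can import from Proposition~\ref{pr:findfrom}; without the inductive case analysis (or an equivalent argument), the correctness claim of the lemma is not established.
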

\begin{proof}
Let us show by induction that $\FindFrom{k}(l,r,t,d,s)$ returns the leftmost substring $x[i,j]$ such that $\sign(f(x[i,j]))\in s$. If $k=2$, we check whether $x_t=x_{t-1}$ before $x_t=x_{t+1}$.

Assume that there is another minimal substring $x[i',j']$ such that $i'\leq t\leq j'$, $f(x[i,j])=f(x[i',j'])$ and $i'<i$.

\begin{enumerate}
    \item Assume that there are $j_1$ and $i_2$ such that $i<j_1<i_2<j$, $|f(x[i,j_1])|=|f(x[i_2,j])|=k-1$ and $\sign(f(x[i,j_1]))=\sign(f(x[i_2,j]))\in s$.
    
    By induction one of the invocations of $\FindFrom{k-1}$ or $\FindFirst{k-1}$ finds $x[i_2,j]$ and it the leftmost. Therefore, $j'=j$. If $i'<i$, then $x[i',j']$ is not minimal or $|f(x[i',j'])|>|f(x[i,j])|$, a contradiction.
    
    \item Assume that there are $j_1$ and $i_2$ such that $i<i_2<j_1<j$, $|f(x[i,j_1])|=|f(x[i_2,j])|=k-1$ and $\sign(f(x[i,j_1]))=\sign(f(x[i_2,j]))\in s$.
     By induction $x[i_2,j]$ is the leftmost $\pm(k-1)$-substring. Therefore, $j'=j$. If $i'<i$, then $x[i',j']$ is not minimal or $|f(x[i',j'])|>|f(x[i,j])|$, a contradiction.
    
\end{enumerate}

If $d>r-l$ the algorithm finds $x[i,j]$. If  $d<r-l$, the algorithm could find the wrong substring (not the leftmost one containing $t$). So, we should to find the maximal $d$ such that $\FindFrom{k}$ finds a substring. In that case,  when we amplify the randomized version of the algorithm, we get the required one.

Searching by Grover's search for the maximal $d$ requires the same $O(\sqrt{r-l})$ expected number of iterations due to \cite{ll2016,k2014}.  
 The total complexity of the algorithm is $O(\sqrt{r-l}(\log(r-l))^{0.5(k-1)})$ due to the complexity of the $\FindFrom{k}$ procedure.
\end{proof}

$\FindFixedPos{k}(l,r,t,s,right)$ searches for the rightmost substring $x[i,j]$ such that $\sign(f(x[i,j]))\in s$ and $|f(x[i,j])|=k$, i.e. $i\leq t \leq j$ and there is no $x[i',j']$ such that $i'\leq t\leq j'$, $j<j'$ and $f(x[i',j'])=f(x[i,j])$.

The algorithm is similar to $\FindFixedPos{k}(l,r,t,s,left)$, but uses $\FindFromRight{k}$.

\subsection{\texorpdfstring{$\FindFirst{k}$}{FindFirst\_k} Algorithm's Description}

The $\FindFirst{k}$ procedure calls $\FindLeftFirst{k}$ or
$\FindRightFirst{k}$ depending on the direction. Since both version
are essentially symmetric, we only present the search from the left
below (i.e. when the direction is right).
For reasons that become clear in the proof, we need to boost the success
probability of some calls. We do so by repeating them several times and
taking the majority: by this we mean that we take the most common answer,
and return an error in case of a tie.

\begin{algorithm}
    \caption{$\FindRightFirst{k}(l,r,s)$. The algorithm for searching for the first $\pm k$-substring.\label{alg:substringk_first_right}}
    \begin{algorithmic}
        \State $lBorder\gets l, rBorder\gets r$
        \State $d\gets 1$ \Comment{depth of the search}
        \While{$lBorder+1<rBorder$}
            \State $mid \gets \lfloor(lBorder+rBorder)/2\rfloor $
            \State $v_l\gets \FindAny{k}(lBorder,mid,s)$
                \Comment{repeat $2d$ times and take the majority}
            \If{$v_l\neq \NULL$}
                \State $rBorder\gets mid$
            \EndIf
            \If{$v_l=\NULL$}
                \State $v_{mid}\gets \FindFixedPos{k}(lBorder,rBorder,mid,s,left)$
                    \Comment{majority of $2d$ runs}
                \If{$v_{mid}\neq \NULL$}
                    \State $v\gets v_{mid}$
                    \State Stop the loop.
                \EndIf
                \If{$v_{mid}=\NULL$}
                    \State $lBorder\gets mid+1$
                \EndIf
            \EndIf
            \State $d\gets d+1$
        \EndWhile            
        \State\Return{$v$}
    \end{algorithmic}
\end{algorithm}


\subsection{Proof of Claim on \texorpdfstring{$\FindFirst{k}$}{FindFirst\_k} Procedure from Proposition \ref{pr:findfrom}}
Let us prove the correctness of the algorithm for $direction=right$ and $s=\{+1\}$. The proof for other parameters is similar.

First, we show the correctness of the algorithm assuming there are no
errors.
The algorithm is essentially a binary search. At each step we find the middle of the search segment $[lBorder, rBorder]$ that is $mid = \lfloor(lBorder+ rBorder)/2 \rfloor$. There are three options.
\begin{itemize}
\item There is a $k$-substring in $[lBorder, mid]$, then the leftmost $k$-substring is in this segment. 
\item There are no $k$-substrings in $[lBorder, mid]$, but $mid$ is inside a $k$-substring. If we find the leftmost substring containing $min$, it is the required substring.
\item There are no $k$-substrings in $[lBorder, mid]$ and $mid$ is not inside a $k$-substring. Then the required substring is in $[mid+1, rBorder]$.
\end{itemize}
In each iteration of the loop the algorithm finds a smaller segment containing the leftmost $k$-substring or finds it if it contains $mid$. We find the $k$-substring in the iteration that corresponds  to the $[lBorder, rBorder]$ segment such that $(rBorder-lBorder)/2 \leq j-i$ or earlier.

Second, we compute complexity of the algorithm (taking into account
the repetitions and majority votes).
The $u$-th iteration of the loop considers a segment $[lBorder,rBorder]$. The length of this segment is at most $w\cdot 2^{-(u-1)}$ where $w=r-l$. The complexity of $\FindAny{k}(lBorder,mid,s)$ is at most $O\left(\sqrt{w\cdot 2^{-(u-1)-1}}\left(\log{(w\cdot 2^{-(u-1)-1})}\right)^{0.5(k-1)} \right)=O\left(\sqrt{w\cdot 2^{-(u-1)-1}}\left(\log{(r-l)}\right)^{0.5(k-1)} \right)$.
Also, $\FindFixedPos{k}(lBorder,rBorder,mid,s,left)$ has complexity
$O\left(\sqrt{w\cdot 2^{-(u-1)}}\left(\log{(w\cdot 2^{-(u-1)})}\right)^{0.5(k-1)} \right)=O\left(\sqrt{w\cdot 2^{-(u-1)}}\left(\log{(r-l)}\right)^{0.5(k-1)} \right)$. So the total complexity of the $u$-th iteration is $O\left(u\sqrt{w\cdot 2^{-(u-1)}}\left(\log{(r-l)}\right)^{0.5(k-1)} \right)$,
since at the $u$-th iteration, we repeat each call $2u$ times to take
a majority.
The number of iterations is at most $\log_2 w$. Let us compute the total complexity of the binary search part:

\begin{align*}
    O\left(\sum_{u=1}^{\log_2 w}2u\sqrt{w\cdot 2^{-(u-1)}}\left(\log{(r-l)}\right)^{0.5(k-1)}\right)
    &=O\left(\sqrt{w}\left(\log{(r-l)}\right)^{0.5(k-1)}\sum_{u=1}^{\log_2 w}u(\sqrt{2})^{-(u-1)}\right)\\
    &=O\left(\sqrt{w}\left(\log{(r-l)}\right)^{0.5(k-1)}\sum_{u=0}^{\infty }(u+1)(\sqrt{2})^{-u}\right)\\
    &=O\left(\sqrt{w}\left(\log{(r-l)}\right)^{0.5(k-1)}\frac{\sqrt{2}^2}{(\sqrt{2}-1)^2}\right)\\
    &=O\left(\sqrt{w}\left(\log{(r-l)}\right)^{0.5(k-1)}\right).
\end{align*}

Finally, we need to analyze the success probability of the algorithm:
at the $u^{th}$ iteration, the algorithm will run each test $2u$ times
and each test has a constant probability of failure $\varepsilon$.
Hence
for the algorithm to fail (that is make a decision that will not lead
to the first $\pm k$-substring) at iteration $u$, at least half of the
$2u$ runs must fail: this happens with probability at most
\[
    {2u\choose u}\varepsilon^u
    \leqslant \left(\frac{2ue}{u}\right)^u\varepsilon^u
    \leqslant (2e\varepsilon)^u.
\]
Hence the probability that the algorithm fails is bounded by
\[
    \sum_{u=1}^{\log_2 w}(2e\varepsilon)^u
    \leqslant\sum_{u=1}^{\infty}(2e\varepsilon)^u
    \leqslant \frac{2e\varepsilon}{1-2e\varepsilon}.
\]
By taking $\varepsilon$ small enough (say $2e\varepsilon<\tfrac{1}{3}$), which
is always possible by repeating the calls a constant number of times to
boost the probability, we can ensure that the algorithm a probability of
failure less than $1/2$.

\section{Proof of Theorem \ref{th:upper_bound}}
\label{sec:Dyck-n,k}

\begin{proof}
Let us show that if $x'$ contains $\pm(k+1)$-substring then one of  three conditions of $\dyck_{k,n}$ problem is broken.

Assume that $x'$ contains $(k+1)$ substring $x'[i,j]$. If $j\geq k+n$, then $f(x[i-k,n-1])>0$, because $f(x'[n,j])=j-n+1\leq k<k+1$. Therefore, prefix $x[0,i-k]$ is such that $f(x[0,i-k-1])<0$ or $f(x[0,n-1])>0$ because  $f(x[0,n-1])=f(x[0,i-k])+f(x[i-k-1,n-1])$. So, in that case we break one of conditions of $\dyck_{k,n}$ problem.

If $j<k+n$ then $x[i-k,j-k]$ is $(k+1)$ substring of $x$. 

Assume that $x'$ contains $-(k+1)$ substring $x'[i,j]$. If $i< k$, then $f(x[0,j-k])<0$, because $f(x'[i,k-1])=-(k-i)\geq -k>-(k+1)$ and $f(x[0,j-k])= f(x'[k,j])=f(x[i,j])-f(x[i,k-1])$. So, in that case the second condition of $\dyck_{k,n}$ problem is broken.

The complexity of Algorithm \ref{alg:dycknh} is the same as the complexity of $\FindAny{k+1}$ for $x'$ that is $O(\sqrt{n+2k}(\log(n+2k))^{0.5k})$ due to Proposition \ref{pr:findfrom}.

We can assume $n\geq 2k$ (otherwise, we can update $k\gets n/2$). Hence,
\[O(\sqrt{n+2k}(\log(n+2k))^{0.5k})=
O(\sqrt{2n}(\log(2n))^{0.5k})=O(\sqrt{n}(2\log{n})^{0.5k})=O(\sqrt{n}(\log{n})^{0.5k})\]

The error probability is the same as the complexity of $\FindAny{k+1}$. 
\end{proof}

\section{Reduction for the proof of Theorem \ref{t:dycklowerbound-power}}
\label{sec:reduction}

Before we describe the reduction in detail, we sketch the main idea. 
Recall that $\im(x)=|x|_0-|x|_1$. Note that \[\ex{2m}{m}{m+1}(x)=0 \iff \im(x)=2\] \[\ex{2m}{m}{m+1}(x)=1 \iff \im(x)=0\] whereas 
\begin{equation*}\dyck_{k,n}(x)=1\iff (\max_{p\text{ -- prefix of }x}{\im(p)}\leq k) \wedge (\min_{p\text{ -- prefix of }x}{\im(p)}\geq 0) \wedge (\im(x)=0).\end{equation*} If we could make sure that the minimum and maximum constraints are satisfied, $\dyck_{k,n}$ could be used to compute $\exnn$. To ensure the minimum constraint, we map each $0$ to $00$ and $1$ to $01$. However, this increases $\im(x)$ by $2m$ which can be fixed by appending $1^{2m}$ at the end. Importantly, the resulting sequence $x'$ has $\im(x')=\im(x)$. The first constraint (maximum over prefixes) can be fulfilled by having a sufficiently large $k$; $k=2m+3$ would suffice here. The same idea can be applied iteratively to $\ex{2m}{m}{m+1}$ where the inputs, which could now be the results of functions $\left(\ex{2m}{m}{m+1}\right)^{\ell-1}=x_i$, have been recursively mapped to sequences $x_i'$ with $\im(x_i')=\begin{cases}2 \text{ if }x_i=0\\0\text{ if }x_i=1\end{cases}$.

The reduction formally is as follows.

We call a string $B\in\left\{ 0,1\right\} ^{w}$ of even length
a $\left(w,h\right)$-sized block with width $w$ and height $h$
iff for any prefix $x$ of $B$: $0\leq \im(x)\leq h$
and either $\im(B)=0$ or $\im(B)=2$.

We establish a correspondence between inputs to $\left(\exnn\right)^\ell$ that satisfy the promise and  
$\left(w,h\right)$-sized blocks $B$
for appropriately chosen $w, h$,
so that $\left(\exnn\right)^\ell=1$
iff $\im(B)=0$.

For $l=0$ (the input bits), we have $0$ corresponding to a $(2,2)$-sized block of $00$ and $1$ to a $(2,2)$-sized block of $01$.

For $l>0$,
let us have input bits $x=(x_1, x_2, \ldots, x_{2m})$ of $\exnn$ satisfying the input promise. Assume that the bits (that could be equal to values of $\left(\exnn\right)^{\ell-1}$) correspond to $(w,h)$-sized blocks $B_1, B_2, \ldots, B_{2m}$. Define the sequence $B'=B_1B_2\ldots B_{2m}1^{2m}$. Then it is easy to verify the following claims:
\begin{enumerate}[1)]
    \item $B'$ is a $(2m(w+1),2(m+1)+h)$-sized block;
    \item The output bit of $\exnn(x)$ \emph{corresponds} to $B'$ because
    \[\im(B')=\sum_{i=1}^{2m}{\im(B_i)}+\im(1^{2m})=\begin{cases}2\text{ if }\exnn(x)=0\\ 0\text{ if }\exnn(x)=1\end{cases}.\]
\end{enumerate}

For $l=0$, the inputs correspond to $(2,2)$-sized blocks. Each level adds $2(m+1)$ to the height of the blocks reaching $2+2\ell(m+1)=O(m\ell)$. The width of blocks reaches $O((2m)^\ell)$.

Since for all $(w,h)$-sized blocks $B$: $\dyck_{h,w}(B)=1\iff \im(B)=0$ one can solve the $\left(\exnn\right)^\ell$ problem by running $\dyck_{h,w}$ on the \emph{corresponding} block.

See Figure \ref{fig:ex-dyck-reduction}.

\begin{figure}[H]
\begin{centering}
\begin{tikzpicture}[scale=0.7,transform shape]
\tikzstyle{invis} = [outer sep=0,inner sep=0,minimum size=0]
\tikzstyle{vt} = [very thick]
\draw [step=1cm, very thin,lightgray] (1,0) grid (18.5,4.5) node (v2) {};

\node[invis,label=left:{$0$}] at (1,0) {};
\node[invis,label=left:{$m$}] at (1,1) {};
\foreach \x in {2,...,4}
{
	\node[invis,label=left:{$\x m$}] at (1,\x) {};
}

\draw [vt](16,2) -- (18,0);
\draw [vt](16,2.1) -- (18,0.1);

\draw [decorate, decoration={brace, mirror, amplitude=6}](1,0) -- (16,0) node[invis,midway,yshift=-0.4cm]{$2m\cdot 6m$};
\draw [decorate, decoration={brace, mirror, amplitude=6}](16,0) -- (18,0) node[invis,midway,yshift=-0.4cm]{$2m$};

\draw[vt,fill=white](1,0) rectangle (2.9,2.1) node[midway] {$\exnn$};

\draw[vt,fill=white](3,0) rectangle (4.9,2.2) node[midway] {$\exnn$};

\draw[vt,fill=white](5,0) rectangle (6.9,2.3) node[midway] {$\exnn$};

\node at (8.5,1.5) {\Huge{$\dots$}};g

\draw[vt,fill=white](10,1.7) rectangle (11.9,4.1) node[midway] {$\exnn$};
\draw[vt,fill=white](12,1.8) rectangle (13.9,4.1) node[midway] {$\exnn$};
\draw[vt,fill=white](14,1.9) rectangle (15.9,4.1) node[midway] {$\exnn$};

\draw [vt](2.9,0) -- +(0.1,0);
\draw [vt](2.9,0.1) -- +(0.1,0);

\draw [vt](4.9,0) -- +(0.1,0);
\draw [vt](4.9,0.1) -- +(0.1,0);
\draw [vt](4.9,0.2) -- +(0.1,0);

\draw [vt](6.9,0) -- +(0.1,0);
\draw [vt](6.9,0.1) -- +(0.1,0);
\draw [vt](6.9,0.2) -- +(0.1,0);
\draw [vt](6.9,0.3) -- +(0.1,0);

\draw [vt](15.9,2) -- +(0.1,0);
\draw [vt](15.9,2.1) -- +(0.1,0);

\draw [vt](13.9,1.9) -- +(0.1,0);
\draw [vt](13.9,2) -- +(0.1,0);
\draw [vt](13.9,2.1) -- +(0.1,0);

\draw [vt](11.9,1.8) -- +(0.1,0);
\draw [vt](11.9,1.9) -- +(0.1,0);
\draw [vt](11.9,2) -- +(0.1,0);
\draw [vt](11.9,2.1) -- +(0.1,0);

\draw [vt](9.9,1.7) -- +(0.1,0);
\draw [vt](9.9,1.8) -- +(0.1,0);
\draw [vt](9.9,1.9) -- +(0.1,0);
\draw [vt](9.9,2) -- +(0.1,0);
\draw [vt](9.9,2.1) -- +(0.1,0);

\draw[vt,red, dashed](1,0) rectangle (18,4.1);

\end{tikzpicture}
\par\end{centering}
\caption{\label{fig:ex-dyck-reduction}The reduction $\exnn\circ\exnn\protect\leqslant\dyck_{4m+6,12m^{2}+2m}$.
The line of the graph follows the input word along the \emph{x}-axis and shows
the number of yet-unclosed parenthesis along the \emph{y}-axis  (i.e., a zoomed-out version of Figure \ref{f:pyramids}).
The input word $B_1 B_2 \dots B_{2m} 1^{2m}$ corresponds to the outer function $\exnn$ with $B_j$
being a block corresponding to the output of an inner $\exnn$.
The ticks at the starts and ends of blocks depict that if the line enters the block at height $i$, it exits at height $i$ or $i+2$. In the block the line never goes below $0$ or above $h+i$.
The red dashed part then forms a new block $B'$.
By replacing the blocks $B_j$ with blocks $B'$ we can further iterate $\exnn$ to get the reduction $\exnn\circ\left(\exnn\right)^{\ell-1}\protect\leqslant\dyck_{O\left(\ell m\right),O\left(\left(2 m \right)^{\ell}\right)}$.}
\end{figure}

\section{A quantum algorithm for \texorpdfstring{$\dirtwod_{n,k}$}{2D-DConnectivity}}
\label{sec:narrowgrid}

In this section, we prove Theorem \ref{th:upper_dirtwod} by  
constructing a quantum algorithm for $\dirtwod_{n,k}$. 
The main idea is to construct an AND-OR formula
for $\dirtwod_{n,k}$ and to use one of quantum algorithms for AND-OR formula evaluation.
To achieve the optimal query complexity, we use the algorithm by Reichardt \cite{Reichardt14} which evaluates an AND-OR formula of size $L$ with $O(\sqrt{L})$ queries.
To achieve a time efficient quantum algorithm,
we can use quantum algorithms from \cite{Ambainis+10} or \cite{Reichardt11}
for which the number of queries is slightly larger ($O(\sqrt{Ld})$ for \cite{Ambainis+10} 
and $O(\sqrt{L \log L})$ for \cite{Reichardt11})
and the number of non-query steps is $O(\log^c L)$ per one query step. For the formula that we construct, $d=\log L$ and either of those 
quantum algorithms uses $O(\sqrt{L \log L})$ queries and $O(\sqrt{L} \log^c L)$ time steps.

We first deal with the case when $n=2^m$ for some non-negative integer $m$. The idea for the construction of the AND-OR formula is to split the grid in two: any path from $(0,0)$ to $(n,k)$ must pass
through a vertex $(\frac{n}{2},r)$ for some $r:\;1\leq r\leq k$. For the paths
to and from $(\frac{n}{2},r)$ we can apply this reasoning recursively. Let us denote by
$F_{\mu,\kappa,i,j}$ our formula for the path from vertex $(i,j)$ to 
$(i+2^\mu,j+\kappa)$, and by $L_{\mu,\kappa}$ its size (the number of variable instances
it has; it does not depend on $i$, $j$). Thus we have the recurrent formulae
$$
F_{\mu,\kappa,i,j}=
\bigvee_{r=0}^\kappa\left(F_{\mu-1,r,i,j}\land F_{\mu-1,\kappa-r,i+2^{\mu-1},j+r}\right),
$$
$$
L_{\mu,\kappa}=
\sum_{r=0}^\kappa\left(L_{\mu-1,r} + L_{\mu-1,\kappa-r}\right)
=2\sum_{r=0}^\kappa L_{\mu-1,r}.
$$
For the base case $F_{0,\kappa,i,j}$ (i.~e.\ for a $1\times \kappa$ grid) we
simply use an OR of all the paths (represented as an AND of all its edges). There are $\kappa+1$ 
paths, each of length $\kappa+1$, thus $L_{0,\kappa}=(\kappa+1)^2$.

It follows by induction on $\mu$ that 
$L_{\mu,\kappa} < 2^{\mu+1}\cdot\binom{\kappa+\mu+2}{\kappa}$. For the induction basis 
we have $L_{0,\kappa} < (\kappa+1)(\kappa+2) = 2 \binom{\kappa+2}{\kappa}$, and
for the induction step:
$$
L_{\mu,\kappa}=2\sum_{r=0}^\kappa L_{\mu-1,r}
< 2^{\mu+1}\sum_{r=0}^\kappa \binom{r+\mu+1}{r}=2^{\mu+1}\binom{\kappa+\mu+2}{\kappa}.
$$
Using a well-known upper bound for binomial coefficients we obtain: 
$L_{m,k}<2^{m+1}(e\cdot(k+m+2)/k)^k=O\left(n(e(1+\frac{\log_2 n}{k}))^k\right)$.
There exists a quantum algorithm with $O(\sqrt{L})$ queries for a formula of size $L$
\cite{Reichardt14}, thus we obtain the complexity mentioned in the theorem statement. 

For an arbitrary $n$ we can find the smallest $m$ for which $n\leq 2^m$ and use the formula for the $2^m \times k$
grid obtained by adding ancillary edges from the vertex $(n,k)$ to $(2^m,k)$
(using the edge variables of the added part of the grid as constants).
Since the value of $n$ thus increases no more than two times, the complexity
estimation increases by at most a constant multiplier.

\end{document}